\DeclareMathSymbol{\shortminus}{\mathbin}{AMSa}{"39}
\newcommand{\mn}[1]{\ensuremath{\mathsf{#1}}}
\newcommand{\qnrleq}[3]{\ensuremath{(\leqslant #1 \; #2 \; #3)}}
\newcommand{\qnrgeq}[3]{\ensuremath{(\geqslant #1 \; #2 \; #3)}}
\newcommand{\qnreq}[3]{\ensuremath{(= #1 \; #2 \; #3)}}
\newcommand{\Aa}{{\cal{A}}}
\newcommand{\Ii}{{\cal{I}}}
\newcommand{\Jj}{{\cal{J}}}
\newcommand{\Kk}{{\cal{K}}}
\newcommand{\Mm}{{\cal{M}}}
\newcommand{\Tt}{{\cal{T}}}
\newcommand{\notmodels}{\centernot\models}
\newcommand{\ALCO}{\ensuremath{{\cal{ALC\hspace{-0.25ex}O}}}\xspace}
\newcommand{\SOF}{\ensuremath{{\cal{S\hspace{-0.2ex}O\hspace{-0.24ex}F}}}\xspace}
\newcommand{\SQ}{\ensuremath{{\cal{S\hspace{-0.06ex}Q}}}\xspace}
\newcommand{\SOQ}{\ensuremath{{\cal{S\hspace{-0.06ex}O\hspace{-0.1ex}Q}}}\xspace}
\newcommand{\SIQfwd}{\ensuremath{{\cal{S\hspace{-0.06ex}I\hspace{-0.1ex}Q}^\shortminus}}\xspace}
\newcommand{\bagz}{\ensuremath{\Imf}\xspace}
\newcommand{\bago}{\ensuremath{\mathfrak{r}}\xspace}
\newcommand{\concepts}{\ensuremath{\mn{N_\mn{C}}}}
\newcommand{\roles}{\ensuremath{\mn{N_{R}}}}
\newcommand{\troles}{\ensuremath{\mn{N_{R}^{t}}}}
\newcommand{\ntroles}{\ensuremath{\mn{N_{R}^{nt}}}}
\newcommand{\rel}{\mn{rel}}
\newcommand{\Nomi}{\mn{nom}}
\newcommand{\Ind}{\mn{ind}}
\newcommand{\twoexp}{\ensuremath{\textsc{2ExpTime}}\xspace}
\newtheorem{definition}{Definition}
\newtheorem{theorem}{Theorem}
\newtheorem{lemma}{Lemma}
\newtheorem{claim}{Claim}
\title{On Finite and Unrestricted Query Entailment beyond \SQ with Number Restrictions on Transitive Roles}
\author{
Tomasz Gogacz$^1$\and
 V\'ictor Guti\'errez-Basulto$^2$\and
Yazm\'in Ib\'a\~nez-Garc\'ia$^{2}$\and \\
Jean Christoph Jung$^3$ \And 
Filip Murlak$^1$
\\
\affiliations
$^1$University of Warsaw, Poland\\
$^2$Cardiff University, UK\\
$^3$University of Bremen, Germany\\
\emails     
t.gogacz@mimuw.edu.pl, 
\{gutierrezbasultov, ibanezgarciay\}@cardiff.ac.uk,\\
jeanjung@uni-bremen.de, 
fmurlak@mimuw.edu.pl
}
\begin{document}

\maketitle 
\begin{abstract}
%

 We study the description logic \SQ with number restrictions
applicable to transitive roles, extended with either \emph{nominals}
or \emph{inverse roles}.  We show tight \textsc{2ExpTime} upper bounds
for \emph{unrestricted} entailment of \emph{regular path queries} for
both extensions and \emph{finite} entailment of \emph{positive
existential queries} for nominals.  For inverses, we establish
\twoexp-completeness for \emph{unrestricted} and \emph{finite}
entailment of \emph{instance queries} (the latter under restriction to
a single, transitive role).

\end{abstract}

\section{Introduction}

A prominent line of research in knowledge representation and database
theory has focused on the evaluation of queries over incomplete data
enriched by ontologies
providing background knowledge.
In this paradigm, ontologies are commonly formulated using
description logics (DLs),
believed to offer a good balance between expressivity and complexity.
This
is supported, for instance, by
the good understanding of `data-tractable' DLs~\cite{KontchakovZ14,BienvenuO15}. 
Yet, for some expressive DLs  the complexity of query entailment is
less understood.


In this paper, we study query entailment in  extensions of the description
logic (DL) \SQ
allowing number restrictions (\Qmc) to be applied to transitive roles
(\Smc). Most previous work on query entailment in expressive DLs, such
as $\mathcal{SHIQ}$ or $\mathcal{SHOQ}$, forbid the interaction of
number restrictions and transitive
roles~\cite{GlimmLHS08,GlimmHS08,CalvaneseEO14}, but it is required in
areas like biomedicine, e.g., to restrict the number of certain parts
an organ has. For instance, one can express that the human heart has
exactly one mitral valve, which has to be shared by its left and right
atrium~\cite{GuIbJu-AAAI18}. Allowing for the interaction of \Smc and
\Qmc is dangerous in the sense that even modest extensions of \SQ,
such as with role inclusions or inverse roles, lead to an undecidable
satisfiability problem~\cite{DBLP:conf/lpar/KazakovSZ07}.
%
%
%
%
Decidability of satisfiability in \SQ and in its extension with
nominals was shown several years
ago~\cite{DBLP:conf/lpar/KazakovSZ07,KaminskiS10}, but only recently
tight computational complexity bounds were
established~\cite{GuIbJu-AAAI17}. 
Even more recently, decidability for entailment of regular path
queries over \SQ knowledge bases was established. More precisely,
based on a novel \emph{tree-like model property} of \SQ it was
possible to devise an automata-based decision procedure yielding a
tight \twoexp upper bound~\cite{GuIbJu-AAAI18}.

The objective of this paper is to provide a more complete picture of
query entailment in DLs with number restrictions on transitive roles.
We pursue two specific goals.

First, we aim at understanding the
limits of decidability of query entailment for such DLs.  To
this end, we investigate the extensions of \SQ by \emph{nominals
(\SOQ)} and \emph{controlled inverse roles (\SIQfwd)}, where we
allow number restrictions on inverse non-transitive roles and only
existential restrictions on inverse transitive roles. 
%
As query language, we consider \emph{positive existential regular path
queries}, thus capturing the common languages of conjunctive 
and regular path queries.

Our second aim is to initiate the study of \emph{finite} query
entailment for \SIQfwd and \SOQ, where one is interested in reasoning
only over finite models.  This distinction is crucial because in
database applications, both database instances and the models they
represent are commonly assumed to be finite. The study of finite query
entailment in \SQ is interesting since, due to the presence of
transitivity, \SQ lacks \emph{finite controllability}, and therefore
unrestricted and finite entailment do not coincide. Interestingly,
most previous works on finite query entailment consider logics lacking
finite controllability because of number restrictions and inverse
roles~\cite{Rosati08,Pratt-Hartmann09,GarciaLS14,AmarilliB15}.  The
study of finite query entailment in logics with transitivity (without
number restrictions on transitive roles) started only recently
\cite{Rudolph16,GogaczIM18,Kieronski18}. Here, we focus on finite
entailment of positive existential queries in \SOQ and of instance
queries in \SIQfwd.
 
Our main contributions are as follows.  In
Sect.~\ref{sec:tree-likeMP}, we start by showing a \emph{tree-like
model property} for both \SOQ and \SIQfwd.  More specifically, we
carefully extend and adapt the \emph{canonical} tree decompositions
that were introduced for \SQ in previous work~\cite{GuIbJu-AAAI18} to
also incorporate the
presence of controlled inverses and nominals. Next, we prove that if a
query is not entailed by a knowledge base (KB), then there is a
counter-model with a canonical tree decomposition of small width.
%
%
%
%
This tree-like model property is the basis for automata-based
approaches to unrestricted and finite query entailment in the
remainder of the paper.
%
First, in Sect.~\ref{sec:omqa}, we construct tree automata to
optimally decide entailment of regular path queries over \SOQ and
\SIQfwd KBs in \twoexp. We move then, in Sect.~\ref{sec:fomqa-soq},
to finite entailment of positive existential queries over \SOQ KBs,
showing again an optimal \twoexp upper bound. To this end, we look at
more refined canonical tree decompositions, which ensure the
existence of a finite counter model.  In other words, we reduce finite
query entailment to entailment over models with this special canonical
tree decomposition. Finally, in Sect.~\ref{sec:finsat-siq}, we
investigate the complexity for unrestricted and finite instance query
(IQ) entailment in \SIQfwd. In particular, we show that IQ entailment
is \twoexp-hard both in the finite and in the unrestricted case. We
found this surprising since it is rarely the case that IQ entailment
becomes more difficult when inverses are added to the logic. Moreover,
the result provides an orthogonal reason for \twoexp-hardness for
conjunctive query entailment in \SIQfwd~\cite{DBLP:conf/cade/Lutz08}.
We complement this lower bound with matching upper bounds in the
unrestricted case, thus confirming the conjecture that satisfiability
in \SIQfwd is decidable~\cite{DBLP:conf/lpar/KazakovSZ07}. In the
finite case, we show a \twoexp-upper bound for KBs using a single
transitive role. Note that \SIQfwd with a single transitive role is a
notational variant of the \emph{graded modal logic with converse
$\mathbf{K4}(\Diamond_{\geq}, \Diamond^{\shortminus})$}. Thus, our
result entails \twoexp-completeness for global consequence in $\mathbf{K4}(\Diamond_{\geq},
\Diamond^{\shortminus})$, which was only known to be decidable~\cite{BKieronski18}.

A long version with appendix can be found under
\url{http://www.informatik.uni-bremen.de/tdki/research/papers.html}.
 

\section{Preliminaries}
\label{sec:preliminaries}
 
\subsection{Description Logics} 

We consider a vocabulary consisting of
countably infinite disjoint sets of \emph{concept names} $\mn{N_C}$,
\emph{role names} $\mn{N_R}$, and \emph{individual names} $\mn{N_I}$,
and assume that $\mn{N_R}$ is partitioned into two infinite
sets of \emph{non-transitive role names} $\ntroles$ and
\emph{transitive role names} $\troles$. A \emph{role} is a role name 
or an \emph{inverse role} $r^-$; a \emph{transitive role} is a transitive
role name or the inverse of one.
%
\emph{\SIQfwd-concepts $C,D$} are defined by the grammar
 $$C,D ::= A\mid \neg C \mid C \sqcap D \mid \exists r. C  \mid  \qnrleq n s C  $$ 
where $A \in \mn{N_C}$,
$r$ is a role, $n \geq 0$ is a natural number given in binary, and $s$ is
either a non-transitive role or a transitive role name. 
 \emph{\SOQ-concepts $C,D$} are defined by the grammar
  $$C,D ::= A\mid \neg C \mid C \sqcap D \mid \{a\}  \mid   \qnrleq n r C$$ 
where $A \in \mn{N_C}$,
$r \in \mn{N_R}$, $a \in \mn{N_I}$ and $n$ is as above. 
We will use $\qnrgeq n r C$ as  abbreviation for $\neg \qnrleq {n{-}1}
r C$, together with standard abbreviations $\bot$, $\top$, $C\sqcup
D$, $\forall r.C$. 
Concepts of the form $\qnrleq n  r  C $, $\qnrgeq n r C$, and
$\{a\}$ are called 
\emph{at-most restrictions}, \emph{at-least restrictions}, and
\emph{nominals}, respectively. Note that in $\SIQfwd$ concepts, \emph{inverse
  transitive roles} are not allowed in at-most and at-least restrictions.

A \emph{$\SIQfwd$-TBox (respectively, \SOQ-TBox) \Tmc} is a finite set
of \emph{concept inclusions (CIs)} $C\sqsubseteq D$, where  $C,D$ are
\SIQfwd-concepts (respectively, \SOQ-concepts).  An \emph{ABox} $\Amc$
is a finite non-empty set of \emph{concept} and \emph{role assertions} of the
form $A(a)$, $r(a,b)$ where $A \in \mn{N_C}$, $r \in \mn{N_R}$ and
$\{a,b\} \subseteq \mn{N_I}$; $\mn{ind}(\Amc)$ is the set of 
individual names occurring in \Amc. A \emph{knowledge base (KB)} is a
pair $\Kk=(\Tt, \Aa)$; $\mn{nom}(\Kmc)$ is the set of 
nominals occurring in $\Kmc$ and $\mn{ind}(\Kmc) = \mn{ind}(\Amc) \cup
\mn{nom}(\Kmc)$.

Without loss of generality, we assume throughout the paper that all
CIs are in one of the following \emph{normal forms}:
\begin{align*}
  \textstyle\bigsqcap_i A_i \sqsubseteq \bigsqcup_j B_j,
  \quad A \sqsubseteq \forall r^-. B, \quad A\sqsubseteq \exists
  r^-.B, \\ 
  \quad A \sqsubseteq \qnrleq n s B,\quad A \sqsubseteq \qnrgeq n s B,
\end{align*}
where $A,A_i,B,B_j$ are concept names or nominals, $r\in\roles$, $s$
is a non-transitive role or a transitive role name, 
and empty disjunction and conjunction are
equivalent to $\bot$ and $\top$, respectively. We further assume
that for every at-most and at-least restriction, \Tmc contains an
equivalent concept name. 

\subsection{Interpretations} 

The semantics is given as usual via
\emph{interpretations} $\Imc = (\Delta^\Imc, \cdot^\Imc)$ consisting
of a non-empty \emph{domain} $\Delta^\Imc$ and an \emph{interpretation
function $\cdot^\Imc$} mapping concept names to subsets of the domain
and role names to binary relations over the domain. Further, we adopt
the \emph{standard name assumption}, i.e., $a^\Imc=a$ for all $a \in
\mn{N_I}$.  The interpretation of complex concepts $C$ is defined in
the usual way~\cite{DLBook}.  An interpretation $\Imc$ is a
\emph{model of a TBox \Tmc}, written $\Imc\models\Tmc$ if
$C^\Imc\subseteq D^\Imc$ for all CIs $C\sqsubseteq D\in \Tmc$.  It is
a \emph{model of an ABox \Amc}, written $\Imc\models \Amc$, if
$(a,b)\in r^\Imc$ for all $r(a,b)\in \Amc$ and $a\in A^\Imc$ for all
$A(a)\in \Amc$. Finally, $\Imc$ is a \emph{model of a KB
$\Kk=(\Tt, \Aa)$}, written $\Imc\models \Kmc$, if $\Ii\models \Tt$,
$\Imc\models \Aa$, and $r^\Ii$ is transitive for all $r\in\troles$ occurring in $\Kmc$.
If \Kmc has a model, we say that it is \emph{satisfiable}.

An interpretation $\Imc'$ is a
\emph{sub-interpretation} of $\Imc$, written as $\Imc'\subseteq \Imc$,
if $\Delta^{\Imc'}\subseteq \Delta^\Imc$, $A^{\Imc'}\subseteq A^\Imc$,
and $r^{\Imc'}\subseteq r^{\Imc}$ for all $A\in\mn{N_C}$ and
$r\in \mn{N_R}$. For $\Sigma \subseteq \concepts \cup \roles$,  $\Ii$
is a \emph{$\Sigma$-interpretation} if $A^\Ii=\emptyset$ and
$r^\Ii=\emptyset$ for all $A\in \concepts \setminus \Sigma$ and $r\in
\roles\setminus\Sigma$. The \emph{restriction of $\Imc$ to signature
$\Sigma$} is the maximal $\Sigma$-interpretation $\Imc'$ with
$\Imc'\subseteq \Imc$.
%
The \emph{restriction of $\Imc$ to
domain $\Delta$} is the maximal sub-interpretation of $\Imc$ with
domain $\Delta$. The union $\Ii \cup \Jj$ of $\Ii$ and $\Jj$ is an
interpretation such that $\Delta^{\Ii \cup \Jj} = \Delta^{\Ii} \cup
\Delta^{\Jj}$, $A^{\Ii \cup \Jj} = A^{\Ii} \cup
A^{\Jj}$, and $r^{\Ii \cup \Jj} = r^{\Ii} \cup
r^{\Jj}$ for all $A\in\mn{N_C}$ and $r\in \mn{N_R}$.
The \emph{transitive closure} $\Ii^*$ of $\Ii$ is an interpretation such that
$\Delta^{\Ii^*} = \Delta^{\Ii}$, $A^{\Ii^*} = A^{\Ii}$ for all
$A\in\mn{N_C}$, $r^{\Ii^*} = r^{\Ii}$ for all $r\in \ntroles$, and
$r^{\Ii^*} = (r^{\Ii})^+$ for all $r\in \troles$.

%
A \emph{tree decomposition $\Tmf$ of an interpretation~\Imc} is a pair
$(T,\Imf)$ where $T$ is a tree and $\Imf$ is a function that assigns
an interpretation $\Imf(w) = (\Delta_w, \cdot^{\Imf(w)})$ to each
$w\in T$ such that $\Ii = \bigcup_{w \in T}
\Imf(w)$ and for every $d\in \Delta^\Imc$, the set $\{w\in T\mid d\in
\Delta_w\}$ is connected in $T$.
We often blur the distinction between a node $w$ of $T$ and
the associated interpretation $\Imf(w)$, using the term \emph{bag} for both.
The \emph{width} of $\Tmf$ is
$\sup_{w\in T}|\Delta_w| - 1$; the \emph{outdegree} of $\Tmf$ is the
outdegree of $T$.
For each $d\in\Delta^\Imc$, there is a unique bag $w$ closest to the
\emph{root} $\varepsilon$ such
that $d\in \Delta_w$. We say that $d$ is \emph{fresh} in this bag,
and write $F(w)$ for the set of all elements fresh in $w$.

\subsection{Ontology-mediated Query Entailment}

A \emph{positive existential regular path query (PRPQ)} is a
first-order formula $\varphi=\exists\xbf\, \psi(\xbf)$ with
$\psi(\xbf)$ constructed using $\wedge$ and $\vee$ over
atoms of the form $\Emc(t,t')$ where $t,t'$ are variables from
$\xbf$ or individual names from $\mn{N_I}$, and \Emc is a \emph{path
expression} defined by the grammar
$$\Emc,\Emc' ::= r \mid r^- \mid A?\mid \Emc^* \mid \Emc \cup \Emc'
\mid \Emc\circ\Emc',$$
where $r\in \mn{N_R}$ and $A\in \mn{N_C}$. 
A PEQ is a PRPQ that does not use the operators $^*$, $\cup$, and
$\circ$ in path expressions. Equivalently, it is an FO formula
$\varphi=\exists \xbf\,\psi(\xbf)$ where $\psi$ is constructed using
$\wedge$ and $\vee$ over atoms $r(t,t')$ and $A(t')$ with
$t,t'$ as above. An \emph{instance query (IQ)} is just an
expression of the shape $C(a)$ for some concept $C$ and
$a\in\mn{N_I}$.

The semantics of PRPQs is defined via matches. Let us fix a PRPQ
$\varphi=\exists \xbf\,\psi(\xbf)$ and an interpretation \Imc.  Let
$\mn{ind}(\varphi)$ be the set of individual names in
$\varphi$. A \emph{match for $\vp$ in \Imc} is a function
$\pi:\xbf\cup \mn{ind}(\varphi)\to \Delta^\Imc$ such that $\pi(a)=a$,
for all $a\in \mn{ind}(\varphi)$, and $\Imc,\pi\models\psi(\xbf)$
under the standard semantics of first-order logic extended with a rule
for atoms of the form $\Emc(t,t')$.  An interpretation $\Imc$
satisfies $\vp$, written as $\Imc\models\vp$, if there is a match for
$\vp$ in \Imc.

A PRPQ $\vp$ is \emph{(finitely) entailed by a
KB \Kmc}, if $\Imc\models\vp$ for every (finite) model
\Imc of \Kmc; we write $\Kmc\models\vp$ and $\Kmc\models_{\mn{fin}}\vp$,
respectively, in this case. Accordingly, we write $\Kmc\models
C(a)$ and $\Kmc\models_{\mn{fin}} C(a)$ if 
$a\in C^\Imc$ in all (finite) models
$\Imc$ of \Kmc.

We study the corresponding \emph{decision problem}---whether a given query is
(finitely) entailed by a given KB---for different choices
of knowledge base and query languages.



\section{Tree-like Counter-Model Property}
\label{sec:tree-likeMP}

In this section we show a tree-like model property for $\SIQfwd$ and
$\SOQ$: we show that if a query is not entailed by a KB, then there is
a counter-model with a tree decomposition of bounded width and
outdegree. For the automata-based decision procedure to yield
optimal upper bounds, it is useful to consider \emph{canonical}
decompositions which we define next.

In canonical decompositions elements will be accompanied by certain key
neighbors.
Let us fix a KB $\Kk=(\Tt, \Aa)$.
For an interpretation $\Imc$, an  element $d\in \Delta^\Imc$, and $r \in
\mn{N}_\mn{R}^{t}$, the \emph{$r$-cluster of $d$ in \Imc}, denoted by
$Q^\Imc_r(d)$, is the set containing $d$ and each $e\in\Delta^\Imc$
such that both $(d,e)\in r^\Imc$ and $(e,d)\in r^\Imc$. This is the
closest environment of $d$ wrt.~$r$.
%
%
We also associate with $d$ a larger
set $\rel^\Ii_r(d)$ of $r$-successors \emph{relevant for the at-most  
restrictions of $\Kk$}. We let $\rel^\Ii_r(d)$ be the least set $X$ such that
$Q^\Ii_r(d) \subseteq X$ and for all $e \in X$, $f\in\Delta^\Ii$, and
$A \sqsubseteq \qnrleq n r B$ in  $\Tmc$, if $e\in A^\Imc$, 
$f \in B^\Imc$, and $(e,f)\in
r^{\Imc^*}$, then $Q^\Ii_r(f) \subseteq X$.
%
%
%
The construction of canonical decompositions relies on the
  following properties of relevant successors.

\begin{restatable}{lemma}{lemrelevant}
\label{lem:relevant}
  For each $r\in\troles$, the following hold:
  \begin{enumerate}
  \item \label{it:relevant1}
    for all $d,e\in\Delta^\Ii$, if $e \in \rel^\Ii_r(d)$ then
    $\rel^\Ii_r(e) \subseteq \rel^\Ii_r(d)$; 
  \item \label{it:relevant2}
    if each $r$-cluster in $\Ii$ has size at most $N$, then for
    each $d\in \Delta^\Imc$, $|\rel^\Ii_r(d)|\leq N \cdot 2^{\mn{poly}(|\Tmc|)}$. 
  \end{enumerate}
\end{restatable}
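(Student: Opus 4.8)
The plan is to treat $\rel^\Ii_r(d)$ as a least fixpoint and to exploit two structural features of transitive roles: that $\rel^\Ii_r(d)$ is always a union of whole $r$-clusters, and that along any generating chain a given at-most restriction never needs to be applied twice. Throughout I may assume $\Ii\models\Tmc$ (the only situation relevant for the canonical decompositions), so that $r^\Ii$ is transitive, $r^{\Ii^*}=r^\Ii$, the $r$-clusters partition $\Delta^\Ii$, and every at-most restriction $A\sqsubseteq\qnrleq n r B$ of $\Tmc$ is actually satisfied; this is where the numeric bounds come from.

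For part~\ref{it:relevant1}, I would first record that $\rel^\Ii_r(d)$ is a union of $r$-clusters: the seed $Q^\Ii_r(d)$ is a cluster, and every generation step adds a whole cluster $Q^\Ii_r(f)$. Since the clusters partition the domain, this immediately yields $e\in\rel^\Ii_r(d)\Rightarrow Q^\Ii_r(e)\subseteq\rel^\Ii_r(d)$. Next, being a least fixpoint, $\rel^\Ii_r(d)$ satisfies the closure condition defining $\rel$. Hence $\rel^\Ii_r(d)$ is a set that contains $Q^\Ii_r(e)$ and is closed under the generating rule; as $\rel^\Ii_r(e)$ is by definition the least such set, $\rel^\Ii_r(e)\subseteq\rel^\Ii_r(d)$ follows.

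For part~\ref{it:relevant2}, I would compute the fixpoint in rounds, $X_0=Q^\Ii_r(d)$ and $X_{t+1}$ obtained from $X_t$ by adding $Q^\Ii_r(f)$ for every $e\in X_t$ and at-most restriction $A\sqsubseteq\qnrleq n r B$ with $e\in A$, $f\in B$, and $(e,f)\in r^\Ii$, and then bound the number of clusters by controlling (a) how many clusters each round adds per existing cluster and (b) the number of rounds. For (a), the key observation is that all members of a single cluster share the same \emph{external} $r$-successors (by transitivity through the cluster), so the external $B$-successors of a cluster coincide with those of any one of its $A$-members and therefore number at most $n$; thus each cluster spawns at most $n_{\max}\le 2^{|\Tmc|}$ new clusters per restriction, i.e.\ at most $m\cdot 2^{|\Tmc|}$ in total, where $m\le|\Tmc|$ is the number of at-most restrictions. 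This crucially keeps the per-cluster branching independent of the cluster size $N$. For (b), I would prove an \emph{absorption} property: once a restriction $\rho$ has been applied at an element $c$, any later application of $\rho$ at an $r$-descendant of $c$ produces only $B$-successors already among those of $c$ (again by transitivity). Consequently any generating chain with a repeated restriction can be shortcut to a shorter chain reaching the same cluster, so every cluster already appears within $m$ rounds.

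Combining the two bounds gives $c_{t+1}\le c_t\,(1+m\cdot 2^{|\Tmc|})$ and termination after $m$ rounds, hence at most $(1+m\cdot 2^{|\Tmc|})^{m}=2^{\mn{poly}(|\Tmc|)}$ clusters, each of size at most $N$; therefore $|\rel^\Ii_r(d)|\le N\cdot 2^{\mn{poly}(|\Tmc|)}$. The main obstacle is the round bound (b): formalising the shortcutting of a chain with a repeated restriction and verifying that the shortcut is again a legal generating chain reaching the same cluster is the delicate point, as it requires lifting element-level $r$-reachability across the whole chain of clusters. The counting in (a) and the fixpoint reasoning of part~\ref{it:relevant1} are then routine.
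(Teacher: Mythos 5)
Your proposal is correct and takes essentially the same route as the paper's proof: both arguments bound $\rel^\Ii_r(d)$ by a generation tree whose branching is at most $|\Tmc|\cdot 2^{|\Tmc|}$ (via the binary-coded numbers in at-most restrictions) and whose depth is at most the number of at-most restrictions, the depth bound resting in both cases on the fact that, by transitivity, at-most restrictions and their witness sets are inherited along $r$-chains. Your \emph{absorption} step (a restriction never fires productively twice along a minimal generating chain) is a repackaging of the paper's observation that the set of satisfied at-most concepts strictly grows at every productive node of its tree $(V,E)$, and your cluster-level bookkeeping versus the paper's element-level set $X^\Ii_r(d)$ (multiplied by $N$ at the end) is an immaterial difference.
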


In a canonical tree decomposition, formalized in
Definition~\ref{def:canonical} below, each non-root bag keeps track of
all concepts and a single role indicated by
$\bago$. Nominals are captured within a finite subinterpretation $\Mm$
represented faithfully in all bags; in the absence of nominals, one
can take empty \Mmc and drop~\ref{it:can4}. Conditions
\ref{it:can0}--\ref{it:can3} ensure that apart from $\Delta^\Mm$,
neighboring non-root bags share a single element, sometimes
accompanied by its relevant successors.


\begin{definition}\label{def:canonical}
  A tree decomposition $\Tmf = (T,\Imf)$ is \emph{canonical} if
there exists $\bago \colon T \to \roles \cup \{\bot\}$ with
  $\bago^{-1}(\bot) = \{\varepsilon\}$ such that
\begin{enumerate}[label=(B$_\arabic*$),leftmargin=*,align=left]
\setcounter{enumi}{-1}
\item\label{it:canbase0}
  for each $w\in T$, $\Imf(w)$ is a $\Sigma_w$-interpretation where
  $\Sigma_\varepsilon=\concepts\cup\ntroles$ and $\Sigma_w = \concepts
  \cup \{\bago(w)\}$
  for $w\neq\varepsilon$;
\item\label{it:canbase1}
  for all $v,w \in T$, the restrictions of $\Imf(v)$ and $\Imf(w)$ to
  domain $\Delta_v\cap \Delta_w$ and signature $\Sigma_v\cap\Sigma_w$ coincide;
\item\label{it:canbase3}
  for each $v \in T \setminus \{\varepsilon\}$, $d\in F(v)$, and
  $r\in\troles \setminus \{\bago(v)\}$, a unique child $w$ of $v$ satisfies
  $\bago(w)=r$ and $d \in  \Delta_{w}$;
\end{enumerate}  
and there is an interpretation $\Mm$ with 
$\Nomi(\Kk) \subseteq \Delta^{\Mm}\subseteq\Delta_\varepsilon$, such that for each $w\in
T\setminus\{\varepsilon\}$ and its parent $v$, one has
\begin{enumerate}[label=(C$_\arabic*$),leftmargin=*,align=left]
\setcounter{enumi}{-1}

\item\label{it:can0}
  if $\bago(w)\in \troles$ and $v = \varepsilon$, then 
  $ \Delta_\varepsilon\subseteq\Delta_w$;  

\item\label{it:can1}
  if $\bago(w) \in \ntroles$, then $\Delta_v\cap \Delta_w = \{d\} \cup
  \Delta^{\Mm}$ for some  $d\in F(v)$;

\item\label{it:can2}
 if $\bago(v)\neq \bago(w)\in\troles$ and $v\neq \varepsilon$, then
  $\Delta_w\cap\Delta_{v}=\{d\} \cup \Delta^{\Mm}$ for some $d\in F(v)$;

\item \label{it:can3} if $\bago(w) = \bago(v)= r \in \troles$, then
  $\Delta_v \cap \Delta_w = \rel^{\Imf(v)}_{r}(d) \cup \Delta^{\Mm}$
  and $\rel^{\Imf(v)}_r(d)=\rel^{\Imf(w)}_r(d)$ for some $d$ such that
  either $d\in F(v)$ or $d \in F(u)$ and  $\bago(u)\neq\bago(v)$ for
  the parent $u$ of $v$; and 

\item \label{it:can4} if $\bago(w)= r \in
  \troles$, then $\rel^{\Imf(w)}_{r}(d)=\rel^{\Mm}_r(d)$ for all $d
  \in \Delta^{\Mm}$.  
  
\end{enumerate} 
\end{definition}

\begin{restatable}{theorem}{thmtreelike} 
  \label{thm:tree-like}
  Let $\Kmc=(\Tmc, \Amc)$ be a KB in normal form and $\varphi$ a
  PRPQ with $\Kmc\notmodels\varphi$. If $\Kk$ is a $\SOQ$ KB or a $\SIQfwd$
  KB, then  there exists a model $\Jmc$ of $\Tmc$ and $\Amc$ such that  
  \begin{itemize}
  \item $\Jj$ has a canonical  tree decomposition of width and
    outdegree $\mn{poly}(|\Ind(\Kk)|)\cdot  2^{\mn{poly}(|\Tmc|)}$; and
  \item $\Jj^* \models \Kk$ and $\Jj^* \notmodels \varphi$.\footnote{Recall that in
      a model of the ABox or the TBox, the extensions of
      role names from $\troles$ need not be transitive.} 
  \end{itemize}
\end{restatable}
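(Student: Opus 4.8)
The plan is to start from an arbitrary counter-model of $\varphi$ and reshape it into one carrying a canonical decomposition, treating \SOQ and \SIQfwd uniformly: in the \SOQ case the inverse-role features are vacuous, while in the \SIQfwd case $\Mm$ is empty and condition~\ref{it:can4} is dropped, as remarked before Definition~\ref{def:canonical}. Since $\Kmc\notmodels\varphi$, fix a model $\Imc\models\Kmc$ with $\Imc\notmodels\varphi$; because $\Imc$ is a model of the KB, its transitive roles are already transitive, so $\Imc=\Imc^*$. The first step is a preprocessing that bounds cluster sizes: adapting to nominals and controlled inverses the model-surgery techniques developed for \SQ, I would collapse elements of equal concept type inside each $r$-cluster, so that we may assume every $r$-cluster has size at most $N=2^{\mn{poly}(|\Tmc|)}$. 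By part~\ref{it:relevant2} of Lemma~\ref{lem:relevant} this yields $|\rel^\Imc_r(d)|\le 2^{\mn{poly}(|\Tmc|)}$ for all $d$ and all transitive $r$, which is what ultimately bounds the width of the decomposition.

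The heart of the proof is an unfolding of the transitive-role structure of $\Imc$ into a tree $\Tmf=(T,\Imf)$. For each transitive role $r$, $r$-reachability is a preorder whose quotient by $r$-clusters is a partial order; I would simultaneously unfold these orders (together with the non-transitive and inverse steps) into $T$, duplicating an element whenever it is reached along distinct paths, so that the resulting interpretation $\Jmc=\bigcup_{w}\Imf(w)$ has every element living in a connected subtree of bags. The root bag $\varepsilon$ contains $\Ind(\Kmc)$ together with the fixed subinterpretation $\Mm$ obtained by restricting $\Imc$ to $\Nomi(\Kmc)$, and I copy $\Mm$ faithfully into every bag---this is exactly what the $\Delta^\Mm$ summands in conditions~\ref{it:can1}--\ref{it:can3} and the synchronization~\ref{it:can4} encode. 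Descending one step along a transitive role $r$ from an element $d$ fresh in the parent, the child bag receives the entire relevant neighborhood $\rel^\Imc_r(d)$ (conditions~\ref{it:can0},~\ref{it:can2},~\ref{it:can3}); continuing along the same $r$ keeps this set synchronized between parent and child as~\ref{it:can3} demands, while switching roles, taking a non-transitive step, or adding an existential witness for an inverse transitive role splits off a child sharing only the pivot $d$ together with $\Mm$ (conditions~\ref{it:can1},~\ref{it:can2},~\ref{it:canbase3}). Since inverse transitive roles occur only in existential restrictions in \SIQfwd, their witnesses carry no counting and can be attached freely.

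Three things then need checking. The width and outdegree bounds are immediate: each bag equals $\rel^\Imc_r(d)\cup\Delta^\Mm$ or $\{d\}\cup\Delta^\Mm$, hence has at most $2^{\mn{poly}(|\Tmc|)}+|\Ind(\Kmc)|$ elements, and a bag spawns at most one child per element-role pair, giving $\mn{poly}(|\Ind(\Kmc)|)\cdot 2^{\mn{poly}(|\Tmc|)}$. For $\Jmc^*\models\Kmc$, the ABox holds because $\Ind(\Kmc)$ and its assertions sit in $\varepsilon$, and I would verify each normal-form CI bag-locally: Boolean inclusions, value restrictions, and number restrictions on non-transitive (possibly inverse) roles are witnessed inside single bags, whereas the delicate at-most restrictions $A\sqsubseteq\qnrleq{n}{r}{B}$ on a transitive role are handled precisely because $\rel^\Imc_r(d)$ gathers all $r$-successors relevant to such restrictions and is preserved across the $r$-subtree by~\ref{it:can3} and part~\ref{it:relevant1} of Lemma~\ref{lem:relevant}, so that no counted $r$-successor is lost or duplicated when the transitive closure is taken. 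Finally, $\Jmc^*\notmodels\varphi$ follows from the homomorphism $h\colon\Jmc\to\Imc$ sending every copy back to its original: each bag is a renamed subinterpretation of $\Imc$, so $h$ preserves all atoms, and since $\Imc$ interprets transitive roles transitively, $h$ extends to a homomorphism $\Jmc^*\to\Imc$; a PRPQ is positive and existential, hence preserved along homomorphisms, so a match in $\Jmc^*$ would yield one in $\Imc$, contradicting $\Imc\notmodels\varphi$.

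The main obstacle is the interaction of at-most restrictions on transitive roles with the unfolding: such a restriction counts successors through the entire transitive closure, so naively duplicating an element to break a cycle threatens either to multiply its counted $r$-successors or to sever a required edge once $\Jmc^*$ is formed. The relevant-successor sets are exactly the device reconciling tree-likeness with correct counting, and the subtle part is to prove that spreading $\rel^\Imc_r(d)$ over the connected subtree of bags carrying role $r$---while synchronizing it via~\ref{it:can3}---neither over- nor under-counts after transitive closure and simultaneously admits the query-preserving homomorphism. On top of this sit the two case-specific difficulties: for \SOQ, copying $\Mm$ into every bag so that nominals stay globally identified, which breaks strict treeness but only in a bounded, controlled way; and for \SIQfwd, inserting inverse-transitive existential witnesses without disturbing the forward counting.
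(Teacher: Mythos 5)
Your overall strategy is the paper's: bound cluster sizes, unravel the counter-model into bags built from relevant-successor sets, and pull query matches back along the copy-to-original homomorphism. But there are two concrete gaps. First, your choice of $\Mm$ as the restriction of $\Imc$ to $\Nomi(\Kmc)$ is too small. Condition~\ref{it:can4} requires $\rel^{\Imf(w)}_r(d)=\rel^{\Mm}_r(d)$ for every $d\in\Delta^{\Mm}$ and every transitive bag $w$; if $\Delta^{\Mm}$ contains only the nominals, this forces every nominal's relevant $r$-successors to lie among the nominals themselves, which is in general incompatible with the TBox. And if you instead drop or weaken~\ref{it:can4}, the at-most restrictions on transitive roles fail at nominals: a nominal occurs in infinitely many $r$-bags, each of which may contribute fresh counted $r$-successors, so after transitive closure the count is unbounded. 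The paper's construction therefore takes $\Delta=\bigcup_{a\in\Nomi(\Kk),\,r\in\troles}\rel^\Ii_r(a)$ and sets $\Mm$ to be $\Ii$ restricted to this $\Delta$, replicating the whole set (not just the nominals) in every bag; this is what makes~\ref{it:can4} satisfiable and freezes the nominals' counted successors once and for all, while still keeping the width within $\mn{poly}(|\Ind(\Kk)|)\cdot 2^{\mn{poly}(|\Tmc|)}$ by Lemma~\ref{lem:relevant}.

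Second, your claim that inverse-transitive existential witnesses ``carry no counting and can be attached freely,'' splitting off a child that shares only the pivot $d$ with its parent, is wrong in the dangerous direction. If $e$ is the $r^-$-witness for $d$ (so $(e,d)\in r^\Ii$) and $r$ is transitive, then after closure $e$ inherits all of $d$'s $r$-successors, so any \emph{forward} at-most restriction at $e$ counts through $d$'s entire subtree. This is why the paper's rule $(\mathbf{R}_3)$ places such a witness in a child bag with domain $\rel^\Ii_r(e)\cup\rel^\Ii_r(d)\cup\Delta$, i.e., a \ref{it:can3}-type child sharing the whole relevant set, and why the verification of at-most restrictions hinges on the subclaim (via Lemma~\ref{lem:relevant}~(\ref{it:relevant1})) that exactly one copy of each element of $\rel^\Ii_r(f)$ ever appears among the $r$-successors of a given copy $f'$. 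You correctly flag this ``neither over- nor under-counts after transitive closure'' step as the subtle part, but you do not supply it; it is the actual content of the correctness proof and cannot be discharged as a bag-local check.
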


\begin{proof}
Let us fix a counter-model $\Imc$ for $\Kmc$ and $\varphi$.
We can assume that
$|Q^{\Ii}_r(d)| \leq |\Ind(\Kk)|+2^{\mn{poly}({|\Tmc|})}$ for all
$d\in\Delta^{\Ii}$~\cite{GuIbJu-AAAI18}.
By Lemma~\ref{lem:relevant}, $|\rel^\Ii_r(d)|
\leq |\Ind(\Kk)|\cdot 2^{\mn{poly}(|\Tmc|)}$ for all $d\in\Delta^\Ii$.

To build a canonical tree decomposition $\Tmf$, we unravel
$\Ii$ starting from the interpretation of the ABox and then applying
the extension rules $(\mathbf{R}_0)$--$(\mathbf{R}_3)$ below, 
corresponding to conditions \ref{it:can0}--\ref{it:can3}:
$(\mathbf{R}_0)$ collects relevant successors of the individuals in
the ABox, $(\mathbf{R}_1)$ performs standard unraveling of
non-transitive roles, $(\mathbf{R}_2)$ takes care of the change of roles,
and $(\mathbf{R}_3)$ realizes further unraveling of transitive roles.

More precisely, for the root bag, we take $\Ii$ restricted to the domain 
$\Ind(\Aa) \cup \Delta$ and the signature $\concepts \cup \ntroles$,
where $\Delta$ is the union of $\rel_r^\Ii(a)$ for all $a\in
\Nomi(\Kk)$ and $r\in \troles$. 

$(\mathbf{R}_0)$ For each $r\in\troles$, we add as a child bag of $\varepsilon$
the restriction of $\Ii$ 
to signature $\concepts \cup \{r\}$ and domain $\bigcup_{a \in
\Ind(\Aa)}\rel^\Ii_r(a) \cup \Delta$ with each $e\notin
\Ind(\Aa) \cup \Delta$ replaced by a fresh copy $e'$. We call $e$
the \emph{original of $e'$}.

Then, we use the following rules $(\mathbf{R}_1)$--$(\mathbf{R}_3)$
\emph{ad infinitum}, applying each rule only once to each previously
added bag $v$. 

$(\mathbf{R}_1)$ For each $r\in\ntroles$, and each $d' \in F(v)$, %
fresh in bag $v$, let $d\in\Delta^\Ii$ be the original of $d'$
(possibly $d=d'$) and let $W_0$ be the set of originals of all
$r$-successors and $r$-predecessors of $d'$ in $\Imf(v)$.  Pick a
minimal set $W \subseteq \Delta^\Ii$ containing $\{d\} \cup W_0 \cup
\Delta$ such that for each $s\in \{r, r^-\}$ and $A \sqsubseteq
\qnrgeq n s B$ in $\Tt$, if $d \in A^{\Ii}$, then $d$ has at least $n$
different $s$-successors in $B^\Ii \cap W$.  For each $e \in W
\setminus (W_0 \setminus \Delta)$, add as a child bag of $v$ the
restriction of $\Ii$ to signature $\concepts \cup \{r\}$ and domain
$\{d,e\} \cup \Delta$ with all $r$-edges from $\Delta \setminus\{d\}$
to $\{d,e\}\setminus \Delta$ removed, $d$ replaced by $d'$ and each $
f\in \{e\} \setminus \Delta$, by a fresh copy $f'$.

$(\mathbf{R}_2)$ Assuming $v\neq \varepsilon$, for each $r\in\troles$
with $r \neq \bago(v)$, and each $d' \in F(v)$, let $d$ be the
original of $d'$.  Add as a child bag of $v$ the restriction of $\Ii$
to signature $\concepts\cup\{r\}$ and domain $\rel^\Ii_r(d) \cup
\Delta$ where $d$ is replaced by $d'$ and each $e\in \rel^\Ii_r(d)
\setminus(\{d\}\cup \Delta)$, by a fresh copy $e'$.

$(\mathbf{R}_3)$ Assuming $\bago(v)=r \in \troles$, for each $d' \in
\Delta^v$ fresh in $v$ or in the parent $u$ of $v$ with $\bago(u)\neq
r$, let $d$ be the original of $d'$.  Pick a minimal set $W \subseteq
\Delta^\Ii$ containing $\rel^\Ii_r(d) \cup \Delta$ such that for each
$A \sqsubseteq \qnrgeq n r B$ in $\Tt$, if $d \in A^{\Ii}$, then $d$
has at least $n$ different $r$-successors in $B^\Ii \cap W$, and for
each $A \sqsubseteq \exists r^-. B$ in $\Tt$, if $d \in A^{\Ii}$, then
$d$ has an $r^-$-successor in $B^\Ii \cap W$.  For each $e\in W
\setminus (\rel^\Ii_r(d) \cup \Delta)$, add as a child bag of $v$ the
restriction of $\Ii$ to the signature $\concepts\cup\{r\}$ and domain
$\rel^\Ii_r(e) \cup \rel^\Ii_r(d) \cup \Delta$ where each element
$f\in \rel^\Ii_r(d) \setminus \Delta$ is replaced by its copy $f'$
from $\Imf(v)$, and each element $f\in \rel^\Ii_r(e) \setminus
(\rel^\Ii_r(d) \cup \Delta)$ by a fresh copy $f'$.

\smallskip

Let $\Jj$ be the interpretation underlying the resulting decomposition
$\Tmf$. The function mapping each $d' \in \Delta^\Jj$ to its original
$d\in\Delta^\Ii$ gives a homomorphism from $\Jj$ to $\Ii$, and
consequently also from $\Jj^*$ to $\Ii$. It follows that $\Jj^*
\notmodels \varphi$. Taking $\Ii$ restricted to $\Delta$ as $\Mm$, it
is routine to check that $\Tmf$ and $\Jj$ satisfy the remaining
postulated properties.
%
%
%
Note that while the construction
is described for any normalized $\Kk$, $(\mathbf{R}_1)$ is correct
only if $\Kk$ is either a $\SOQ$ KB or a $\SIQfwd$ KB. Correctness of
$(\mathbf{R}_2)$ and $(\mathbf{R}_3)$ follows from
Lemma~\ref{lem:relevant}~(\ref{it:relevant1}).
\end{proof}
  


\section{PRPQ Entailment for \SIQfwd and \SOQ}
\label{sec:omqa}



We shall now exploit canonicity of tree decompositions 
in an automata-based decision procedure for query entailment in
\SIQfwd and \SOQ, yielding optimal complexity upper bounds.

Let us fix a (\SIQfwd or \SOQ) KB $\Kmc$ and a PRPQ $\varphi$, and denote with
$\Sigma_\mn{C},\Sigma_\mn{R}^t,\Sigma_{\mn{R}}^{nt}$ the concept
names, transitive role names, and non-transitive role names used in
\Kmc. By Theorem~\ref{thm:tree-like}, if $\varphi$ is not entailed by $\Kmc$,
there exists a counter-model admitting a canonical tree decomposition
of width and outdegree bounded by a constant $N$ single exponential in
$|\Kmc|$. We effectively construct a non-deterministic tree
automaton recognizing such decompositions of
counter-models, and thus reduce query entailment to the emptiness
problem. 

Let us introduce the necessary notions for tree automata.  A
\emph{$k$-ary $\Omega$-labeled tree} is a pair $(T,\tau)$ where $T$ is
a tree each of whose nodes has at most $k$ successors and 
$\tau: T\rightarrow \Omega$ assigns a letter from $\Omega$ to each
node.  
A \emph{non-deterministic tree automaton (NTA)} over $k$-ary
$\Omega$-labeled
trees is a tuple $\Amf=(Q,\Omega,q_0,\Lambda)$, where $Q$ is a finite set of
states, $q_0\in Q$ is the initial state, $\Lambda \subseteq
\bigcup_{i\leq k}(Q\times \Omega \times Q^i)$ is a set of transitions. 
A \emph{run $r$} on a $k$-ary $\Omega$-labeled tree $(T,\tau)$ is a
$Q$-labeled tree $(T,r)$ such that $r(\varepsilon)=q_0$ and, for every
$x\in T$ with successors $x_1,\ldots, x_m$, there is a transition
$(r(x),\tau(x),r(x_1)\cdots r(x_m))\in \Lambda$. As usual,
\Amf \emph{recognizes} the set of all $\Omega$-labeled trees admitting
a run.

Since counter-models have a potentially infinite domain, we 
\emph{encode} tree decompositions of width $N$ using a domain $D$ of $2N$ elements, similar to
what has been done, e.g., in~\cite{GradelW99}.  
Intuitively, if $w$ is a successor node of $v$ in the tree
decomposition, then an element $d$
occurring in (the bag at) $w$
represents a fresh domain element iff $d$ does not occur in $v$.
More precisely, the
alphabet $\Omega$ of the automaton is the set of all pairs $(x,\Imc)$
such that either $x\in\Sigma_\mn{R}$ and $\Imc$ is a
$\Sigma_\mn{C}\cup\{x\}$-interpretation with $\Delta^\Imc\subseteq D$, or
$x=\bot$ and $\Imc$ is a $\Sigma_\mn{C}\cup\Sigma_\mn{R}^{nt}$-interpretation with
$\Delta^\Imc\subseteq D$. 
%

\begin{restatable}{lemma}{lemautomaton}
  \label{lem:automaton}
  Given $\Kk$, $\varphi$, and $N$, one can compute in time
  $O(2^{\mn{poly}(N)})$ an NTA
  recognizing the set of encodings of canonical tree decompositions of width and
  outdegree at most $N$ such that for the underlying interpretation
  $\Jj$ it holds that $\Jj^* \models \Kk$ and $\Jj^* \notmodels
  \varphi$, as well as $\Jj \models \Aa$ and $\Jj\models \Tt$.
\end{restatable}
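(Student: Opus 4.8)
The plan is to build a non-deterministic tree automaton $\Amf$ whose language is exactly the set of encodings (over the $2N$-element domain $D$) of canonical tree decompositions of width and outdegree at most $N$ satisfying all the stated conditions. Since the class of trees we want to recognize is the conjunction of several requirements---(i) the labeling encodes a genuine tree decomposition that is \emph{canonical} in the sense of Definition~\ref{def:canonical}; (ii) the underlying interpretation $\Jj$ models $\Amc$ and $\Tmc$; (iii) $\Jj^*\models\Kk$; and (iv) $\Jj^*\notmodels\varphi$---I would construct one automaton for each requirement and take their intersection, using the standard product construction for NTAs, which costs only a polynomial blow-up in the number of states and preserves the $O(2^{\mn{poly}(N)})$ bound as long as each component is of that size.

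First I would verify that \emph{local consistency} of the encoding can be checked by a finite-state device. Because canonicity and the encoding fix, for each bag, a single role $\bago(w)$ together with a full concept labeling on a domain of at most $N$ elements, the alphabet $\Omega$ is itself of size $2^{\mn{poly}(N)}$. Conditions \ref{it:canbase0}--\ref{it:canbase3} and \ref{it:can0}--\ref{it:can4} are all \emph{local}: each constrains only a bag and its parent (the overlap of domains, the agreement of restrictions on the shared signature, the freshness pattern determined by which elements reappear, and the $\rel$-sets stored in the label). Hence a bag's admissibility can be decided from the pair (parent label, current label) plus the bookkeeping of which $\bago(w)$-children a fresh element spawns, all of which is finite-state information. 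The satisfaction of $\Amc$ and of the TBox $\Tmc$ is likewise local once we note that $\Jj^*$ materializes transitivity: an at-most restriction $A\sqsubseteq\qnrleq n s B$ is guarded in each bag by the stored $\rel$-set (by the very definition of relevant successors and Lemma~\ref{lem:relevant}, all $s$-successors counting against the bound are confined to $\rel^{\Jj}_s(d)$, which lives inside neighboring bags), while at-least and $\exists$-restrictions are witnessed by the extension rules and can be checked by inspecting the children. So a first automaton $\Amf_{\mathrm{can}}$ accepts exactly the canonical encodings of width/outdegree $\le N$ whose $\Jj$ satisfies $\Amc,\Tmc$ and whose $\Jj^*$ satisfies $\Kk$.

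The genuinely delicate component is the automaton $\Amf_{\neg\varphi}$ enforcing $\Jj^*\notmodels\varphi$, because a match of a PRPQ may thread through arbitrarily many bags and, crucially, may exploit transitive edges of $\Jj^*$ that are not present in any single bag of $\Jj$. The plan here is to build an automaton recognizing the \emph{complement}: rather than guessing a match (which would only give $\Jj^*\models\varphi$), I would design a device that certifies the \emph{absence} of any match. The standard route, following the tree-automata approach to regular path queries over tree-shaped models, is to first construct an automaton detecting a match of $\varphi$ in $\Jj^*$---tracking, via its states, the partial progress of each path expression's NFA as it is routed through the tree, and accounting for transitive closure by recording reachability within and across $\bago$-labeled subtrees---and then complement it. Because complementation of NTAs on trees is exponential, I must ensure the match-detecting automaton is \emph{deterministic} in the relevant sense or has only $\mn{poly}(N)$ states \emph{before} complementation so that the final $\Amf_{\neg\varphi}$ still has $2^{\mn{poly}(N)}$ states; the key enabling fact is that query atoms and path expressions have size bounded by $|\varphi|\le|\Kk|$, and the canonical structure restricts how $\Jj^*$-edges can arise, so the reachability information an automaton must propagate is of bounded description. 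I expect this step---correctly and economically capturing matches that use materialized transitive edges spanning the decomposition, and then complementing within the budget---to be the main obstacle, and it is where the canonicity of the decomposition (a single role per bag, $\rel$-sets stored locally) pays off by bounding the information that needs to flow across the tree. Finally, I would take the product $\Amf=\Amf_{\mathrm{can}}\times\Amf_{\neg\varphi}$ and observe that its size and construction time are $O(2^{\mn{poly}(N)})$, completing the lemma.
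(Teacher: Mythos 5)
Your overall architecture is exactly the paper's: the desired NTA is the product of an automaton $\Amf_\Kmc$ checking canonicity together with $\Jj\models\Amc$, $\Jj\models\Tmc$, $\Jj^*\models\Kmc$, and an automaton $\Amf_{\neg\varphi}$ checking $\Jj^*\notmodels\varphi$. Your locality argument for the first component matches the paper's points \textit{(i)}--\textit{(v)}: the canonicity conditions relate only a bag to its parent, at-most restrictions over transitive roles are already confined to the $\rel$-sets stored locally (Lemma~\ref{lem:relevant}), and the remaining number restrictions are discharged by looking at the children. The paper additionally spells out the state bookkeeping you elide --- guessing $\Mm$ at the root and carrying it in every state, tracking the distinguished element of~\ref{it:can3} via dedicated state components, and nondeterministically partitioning counting obligations $n=n_0+\dots+n_m$ among the children --- but these are implementation details consistent with your ``finite-state bookkeeping'' claim.

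The genuine gap is in $\Amf_{\neg\varphi}$. The paper does not construct this automaton; it imports it verbatim from prior work (Lemma~6 of the cited \SQ paper), so nothing further is proved there. Your plan --- build an NTA detecting a match of $\varphi$ in $\Jj^*$ and then complement it --- does not go through in the stated framework. The trees here are infinite and the NTAs of this paper carry no acceptance condition (every run is accepting); such safety automata are not closed under complementation, and even for nondeterministic tree automata with B\"uchi or parity acceptance, complementation over infinite trees is not a routine single-exponential step. The standard resolution, and what the cited construction actually does, is to handle the query with (two-way) alternating automata, where complementation is free by dualization, and only afterwards convert to a nondeterministic automaton, absorbing one exponential. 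Your remark that the match-detector should be ``deterministic in the relevant sense or have only $\mn{poly}(N)$ states before complementation'' names the obstacle but does not overcome it; as written, the query component of your argument is missing.
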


\begin{proof}
The desired NTA is the intersection of an NTA $\Amf_\Kmc$
recognizing all canonical tree decompositions such that the
underlying interpretation \Jmc satisfies $\Jmc\models\Amc$,
$\Jmc\models \Tmc$, and $\Jmc^*\models\Kmc$ and an NTA
$\Amf_{\neg\varphi}$ recognizing all tree decompositions of
counter-models of $\varphi$.  Since the latter is known
from~\cite[Lemma~6]{GuIbJu-AAAI18}, we concentrate on
$\Amf_\Kmc=(Q,\Omega,q_0,\Lambda)$, working over $N$-ary trees.

Informally, its construction relies on the following ideas:
\textit{(i)} by~\ref{it:canbase3} and~\ref{it:can2}, in every bag
there is at most one $d$ satisfying the condition `$d\in F(u)$ \ldots'
in Condition~\ref{it:can3}; thus, \textit{(ii)} canonicity can be
checked by initially guessing \Mmc and then comparing neighboring
interpretations and remembering the mentioned $d$ in the states;
\textit{(iii)} $\Jmc\models\Amc$ can be verified by looking at labels
of the root and its direct successors; \textit{(iv)} due to canonicity
and the TBox normal form, $\Jmc\models\Tmc$ can be verified by looking
at the current label (this suffices for at-most restrictions over
transitive roles, due to canonicity) and possibly at successor bags
(at-least restrictions, and at-most restrictions over non-transitive
roles); \textit{(v)} $\Jmc^*\models\Tmc$ is a consequence of
$\Jmc\models \Tmc$, by the normal form.

Formally, the set $Q$ contains $q_0$ and all tuples of the shape 
$$\langle (x,\Imc), F, \Mm, \Bmc, \Cmc, e, r,
f\rangle,$$ 
where $(x,\Imc)\in \Omega$, $F\subseteq \Delta^\Imc$,
$\Mm$ is a $\Sigma_\mn{C}\cup\Sigma_\mn{R}^{nt}$-interpretation
with $\Delta^{\Mm}\subseteq D$, 
$\Bmc\subseteq\Amc$, 
%
\Cmc is a set of assertions of the shape
$\qnrgeq{n}{s}{B}(d)$, $\qnrleq{n}{s}{B}(d)$, or $(\exists s.B)(d)$ with 
$d\in D$, $B\in\Sigma_\mn{C}$, $s$ a role in \Tmc, $n\leq N$, and $e,f\in
D\cup\{\varepsilon\}$, $r\in\Sigma_\mn{R}^t$.

In state $q=\langle (x,\Imc),F,\Mm,\Bmc,\Cmc,e,r,f\rangle$
reading symbol $a=(x',\Imc')$, the automaton allows a
transition only in case the following conditions are satisfied:
\begin{itemize}

  \item Conditions~\ref{it:canbase0}--\ref{it:canbase3}
    and~\ref{it:can0}--\ref{it:can4} with $\Imc,\Imc',x,x',F$ taking
    the role of $\Imf(v),\Imf(w),\bago(v),\bago(w),F(v)$,
    respectively, and `$d\in F(u)$ \ldots' in~\ref{it:can3} replaced
    with `$d=f$';

  \item $x'\neq\bot$ and, if $x=\bot$, then $\Imc'\models \Bmc$;
    
  \item either $e\neq \varepsilon$, $e\in \Delta^{\Imc'}$, and $r=
    x'$, or $e=\varepsilon$ and $x=x'$; 

  \item $\Imc'\models C(d)$ for all $C(d)\in \Cmc$ with $d\in
    \Delta^{\Imc'}$ or $C$ of shape $\qnrgeq{n}{s}{B}$ or $\exists
    r^-.B$;

  \item $\Imc'\models \alpha$ for all $\alpha\in \Tmc$ of the form
    $\bigsqcap_i A_i\sqsubseteq \bigsqcap_j B_j$, $A\sqsubseteq
    \qnrleq{n}{r}{B}$, and $A\sqsubseteq \forall r^-.{B}$.

\end{itemize}
%
%
In this case, $\Lambda$ allows all transitions $(q,a,q_1\cdots q_m)$,
$m\leq N$ where each $q_i$ is of shape
$\langle(x',\Imc'),F',\Mm,\emptyset,\Cmc_i,e_i,r_i,f_i\rangle$
with $F'=\Delta^{\Imc'}\setminus\Delta^\Imc$ and:
\begin{itemize}


  \item for each $d\in F'$ and each $r\in \Sigma_\mn{R}^t\setminus\{x'\}$,
    there is a unique~$i$ such that $e_i=d$ and $r_i=r$; conversely,
    if $e_i\neq \varepsilon$ for some $i$, then $e_i\in F'$ and
    $r_i\neq x'$;
  
  \item if $e\neq \varepsilon$, then $f_i=e$, for all $i$;


  \item for all $A\sqsubseteq \exists r^-.B\in \Tmc$ and $d\in
    A^{\Imc'}\cap F'$ such that $d\notin (\exists r^-.B)^{\Imc'}$, we have
    $(\exists r^-.B)(d)\in \Bmc_i$ for some $i$;



  \item for all $A\sqsubseteq \qnrleq{n}{r}{B}\in \Tmc$, $r\in
    \Sigma_\mn{R}^{nt}$, and $d\in A^{\Imc'}\cap F'$, there is a partition $n=
    n_0+\ldots+n_m$, such that $d\in \qnrleq{n_0}{r}{B}^{\Imc'}$, and
    $\qnrleq{n_i}{r}{B}(d)\in \Cmc_i$, for all $i$;

  \item for all $A\sqsubseteq \qnrgeq{n}{s}{B}\in \Tmc$
    and $d\in A^{\Imc'}\cap F'$, there is a partition $n=
    n_0+\ldots+n_m$, such that $d\in \qnrgeq{n_0}{s}{B}^{\Imc'}$ and
     $\qnrgeq{n_i}{s}{B}(d)\in\Cmc_i$, for all $i$ with $n_i>0$.

\end{itemize}
The transitions for $q_0$ are similar, but they additionally nondeterministically initialize
$\Mm$ and check the non-transitive part of the ABox in the root,
see the appendix. Correctness of the automaton is essentially a
consequence of Points~\textit{(i)}--\textit{(v)} mentioned above.
It is routine to verify that $\Amf_\Kmc$ is of the required size and
can be constructed in the required time. 
\end{proof}

Recall that emptiness of NTAs can be checked in polynomial time. Thus,
Lemma~\ref{lem:automaton} together with the bounds on $N$ from
Theorem~\ref{thm:tree-like}, yields a \twoexp upper bound for PRPQ
entailment in \SIQfwd and \SOQ. A matching lower bound is inherited
from positive existential query answering
in~$\mathcal{ALC}$~\cite{CalvaneseEO14}.
%
\begin{theorem}\label{thm:main-omqa}
  PRPQ entailment over $\SIQfwd$ and $\SOQ$ knowledge bases is
  \twoexp-complete.  
\end{theorem}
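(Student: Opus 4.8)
The plan is to obtain the upper bound by reducing non-entailment to non-emptiness of the automaton from Lemma~\ref{lem:automaton}, and to import the lower bound from a known result for $\mathcal{ALC}$. Concretely, I would set $N = \mn{poly}(|\Ind(\Kmc)|)\cdot 2^{\mn{poly}(|\Tmc|)}$, the width and outdegree bound supplied by Theorem~\ref{thm:tree-like}, and observe that $N$ is single exponential in $|\Kmc|$. Feeding this $N$ into Lemma~\ref{lem:automaton} produces, in time $O(2^{\mn{poly}(N)})$, an NTA $\Amf$ whose recognized trees are exactly the encodings of canonical tree decompositions of width and outdegree at most $N$ whose underlying interpretation $\Jmc$ satisfies $\Jmc^* \models \Kmc$ and $\Jmc^*\notmodels\varphi$. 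Since $N$ is single exponential, both $|\Amf|$ and the time to build it are double exponential in $|\Kmc|$.

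The heart of the upper bound is the equivalence between $\Kmc\notmodels\varphi$ and non-emptiness of $\Amf$. For the forward direction I would invoke Theorem~\ref{thm:tree-like}: if $\Kmc\notmodels\varphi$, there is a model $\Jmc$ of $\Tmc$ and $\Amc$ admitting a canonical tree decomposition of width and outdegree at most $N$ with $\Jmc^*\models\Kmc$ and $\Jmc^*\notmodels\varphi$, and the encoding of this decomposition is then accepted by $\Amf$. For the converse, every tree recognized by $\Amf$ encodes such a decomposition, and its underlying $\Jmc$ yields a genuine counter-model $\Jmc^*$ of $\Kmc$ for $\varphi$, so $\Kmc\notmodels\varphi$. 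Because emptiness of NTAs is decidable in time polynomial in the automaton's size, testing whether $\Amf$ recognizes some tree takes time polynomial in $2^{\mn{poly}(N)}$, that is, double exponential in $|\Kmc|$; this places PRPQ entailment over both $\SIQfwd$ and $\SOQ$ KBs in \twoexp.

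For the lower bound I would simply observe that $\mathcal{ALC}$ is a syntactic fragment of both $\SIQfwd$ and $\SOQ$ and that every PEQ is a PRPQ, so the known \twoexp-hardness of PEQ entailment over $\mathcal{ALC}$ KBs~\cite{CalvaneseEO14} transfers verbatim. I expect no genuine obstacle here, since the substantive work has already been discharged in Theorem~\ref{thm:tree-like} and Lemma~\ref{lem:automaton}. The only points demanding care are the complexity bookkeeping---verifying that the single-exponential bound on $N$ composes with the exponential blow-up of the automaton construction and the polynomial-time emptiness test to land exactly in \twoexp rather than in a larger class---and checking that the soundness direction of the equivalence genuinely returns a model of $\Kmc$, which it does because Lemma~\ref{lem:automaton} guarantees $\Jmc^*\models\Kmc$ for every accepted tree.
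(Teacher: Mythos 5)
Your proposal is correct and follows essentially the same route as the paper: combining the width/outdegree bound $N$ from Theorem~\ref{thm:tree-like} with the automaton of Lemma~\ref{lem:automaton} and polynomial-time NTA emptiness for the upper bound, and inheriting the lower bound from positive existential query entailment in $\mathcal{ALC}$. The additional bookkeeping you spell out (the equivalence of non-entailment with non-emptiness, and the composition of the exponential bounds) is exactly what the paper leaves implicit.
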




\section{Finite PEQ Entailment for \SOQ}
\label{sec:fomqa-soq}

The goal of this section is to establish the following result.

\begin{theorem} \label{thm:fomqa-soq}
Finite PEQ entailment over $\SOQ$ knowledge bases is $\twoexp$-complete. 
\end{theorem}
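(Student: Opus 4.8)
The plan is to establish \twoexp-membership by an automata construction paralleling Section~\ref{sec:omqa}, while the matching lower bound comes for free: since $\mathcal{ALC}$ is finitely controllable for positive existential queries, the \twoexp lower bound for PEQ answering in $\mathcal{ALC}$~\cite{CalvaneseEO14} already applies in the finite case, and transfers to $\SOQ\supseteq\mathcal{ALC}$. I therefore concentrate on the upper bound, where the point of departure is the obvious equivalence: $\Kmc\notfentails\varphi$ iff $\Kmc$ has a \emph{finite} model avoiding~$\varphi$.

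The heart of the argument is a \emph{finite} analogue of Theorem~\ref{thm:tree-like}, asserting that $\Kmc\notfentails\varphi$ iff there is a \emph{finite} canonical tree decomposition $\Tmf=(T,\Imf)$ (so $T$ is a finite tree) of width and outdegree $\mn{poly}(|\Ind(\Kk)|)\cdot 2^{\mn{poly}(|\Tmc|)}$ whose underlying skeleton $\Jmc$ satisfies $\Jmc^*\models\Kmc$ and $\Jmc^*\notmodels\varphi$. The soundness direction is then immediate, since a finite $T$ of bounded width yields a finite $\Jmc$, hence a finite $\Jmc^*$ witnessing $\Kmc\notfentails\varphi$. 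For completeness I would start from a finite counter-model and organise its skeleton so that finiteness of $T$ forces every at-least restriction $A\sqsubseteq\qnrgeq{n}{r}{B}$ on a transitive role to be discharged \emph{locally} at the leaves — within the shared relevant-successor set $\rel_r(d)\cup\Delta^\Mm$ of the leaf bag — rather than by passing to ever-new children, as the infinite unraveling of Theorem~\ref{thm:tree-like} does. To truncate the descent I would exploit that in the starting finite model the transitive condensation of each $r$ is a finite DAG: since relevant-successor configurations range over a set of single-exponential size (Lemma~\ref{lem:relevant}), any branch must, after boundedly many steps, reach a bag whose configuration already occurred, at which point the branch is closed, the outstanding demands being met within its relevant successors together with the faithfully represented nominal part $\Delta^\Mm$ present in every bag.

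On the automata side I would reuse the NTA $\Amf_\Kmc$ of Lemma~\ref{lem:automaton}, now reading \emph{finite} trees, augmented with arity-$0$ (leaf) transitions that fire only when all at-least and existential demands recorded in the $\Cmc$-component of the state are locally fulfilled. Intersecting with the (for PEQ considerably simpler) automaton $\Amf_{\neg\varphi}$ for counter-models of $\varphi$ and testing emptiness over finite trees — decidable in time polynomial in the automaton — yields the desired \twoexp bound, since $\Amf_\Kmc$ is of single-exponential size.

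The main obstacle is the finite tree-like property of the second paragraph, i.e.\ showing that non-finite-entailment already manifests in a finite canonical decomposition. Because \SQ lacks finite controllability, this does not follow from Theorem~\ref{thm:tree-like}: the naive unraveling of a finite model is infinite, and closing branches by looping back must be carried out without (i) exceeding any at-most restriction $A\sqsubseteq\qnrleq{n}{r}{B}$ on a transitive role once the reintroduced edges appear in $\Jmc^*$, and without (ii) creating a spurious homomorphic image of $\varphi$, to which PEQs — being preserved under homomorphisms — are sensitive. I expect the delicate part to be a rank-stratification of the transitive condensation of the starting finite model that simultaneously controls the multiplicities, so that at-most survives the folding, and the reachability patterns relevant to $\varphi$, so that no new match is introduced, thereby certifying that the blocking terminates at bounded depth.
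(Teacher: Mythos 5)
Your lower bound is fine (the paper uses the analogous fact for $\mathcal{ALCO}$), but the upper bound rests on a claim that is false: that $\Kmc\notfentails\varphi$ is witnessed by a \emph{finite} canonical tree decomposition of width $\mn{poly}(|\Ind(\Kk)|)\cdot 2^{\mn{poly}(|\Tmc|)}$. Canonicity forces neighboring non-root bags to overlap in a single element (plus $\Delta^\Mm$), so any $2$-connected substructure of $\Jmc$ — in particular any cycle over a non-transitive role — must fit inside one bag. Now take a TBox forcing every element to have an $s$-successor for a non-transitive $s$ (e.g.\ $\top\sqsubseteq\exists s.\top$) and a PEQ $\varphi$ that is a disjunction of all $s$-cycles of length up to $N{+}1$. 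Every finite counter-model must contain an $s$-cycle (finiteness plus outgoing $s$-edges) of length exceeding the width bound, so no finite canonical decomposition of that width exists, yet $\Kmc\notfentails\varphi$ can hold. The difficulty is thus not confined to the transitive descent you focus on: finiteness of the \emph{model} is simply not the same as finiteness of the \emph{tree}, and your proposed rank-stratification of the transitive condensation cannot repair this. Relatedly, your idea of discharging at-least restrictions "locally at the leaves" conflicts with~\ref{it:can1}: distinct witnesses over a non-transitive role necessarily live in distinct child bags.

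The paper takes a different route precisely to avoid this. It keeps the tree decomposition \emph{infinite} and only imposes \emph{safety} (Definition~\ref{def:safe}): no infinite downward path on which $\bago$ is constantly the same transitive role name. Safety is a B\"uchi condition, checkable by a small automaton whose product with the NTA of Lemma~\ref{lem:automaton} gives the \twoexp procedure. The real work is then the equivalence between finite counter-models and counter-witnesses with safe canonical decompositions: one direction is that unraveling a finite model is automatically safe (Lemma~\ref{lem:soq-safe}); for the converse one takes a \emph{regular} safe decomposition, merges monochromatic transitive blocks into a strongly canonical decomposition of still-bounded width (Lemma~\ref{lem:soq-width}), and then produces a finite model not by truncating the tree but by identifying pairs of isomorphic bags along each branch and redirecting edges through the isomorphism — with the colored blocking principle guaranteeing that no new match of the PEQ is created. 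If you want to salvage your plan, you would have to replace "finite tree" by "safe infinite tree" and supply this collapsing argument; as written, the central equivalence of your second paragraph does not hold.
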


The lower bound follows directly from the result on unrestricted
query entailment for $\ALCO$ \cite{DBLP:conf/kr/NgoOS16}, as the latter logic
enjoys finite controllability. For the upper bound, we carefully adapt
an approach previously used for $\SOF$ \cite{GogaczIM18}, which
relies on the following additional condition imposed on tree-like
counter-models. 

\begin{definition} \label{def:safe}
A canonical tree decomposition is \emph{safe}, if it contains no
infinite downward path such that for each node $w$ in this path,
$\bago(w)$ is the same transitive role name.
\end{definition}

In what follows, by a \emph{counter-witness} we understand a model 
of the ABox and the TBox whose transitive closure is a counter-model.
The approach requires two ingredients: (1)~equivalence of the
existence of a finite counter-model and the existence of a
counter-witness that admits a safe canonical tree decomposition, and
(2)~effective regularity of the set of safe canonical
tree decompositions (of given width and outdegree) of counter-witnesses.
For (2), observe that safety can be easily checked
by an automaton with B\"uchi acceptance condition~\cite{2001automata}
and the number of states quadratic in the number of transitive role names
in $\Kk$: on each path the automaton remembers the role names
associated with two most recently visited nodes; the state is
accepting unless they are the same transitive role name. The product
of this automaton and the one constructed in the previous section
recognizes the desired language.
Assuming (1) is also available, the upper bound follows like for the
unrestricted case: the algorithm builds the automaton and tests its emptiness.

The reminder of this section provides (1). One
implication is obtained via the following observation. 

\begin{restatable}{lemma}{lemmasoqsafe}
\label{lem:soq-safe}
If $\Ii$ is a finite interpretation of a $\SOQ$ KB, then the unravelling procedure
from the proof of Theorem~\ref{thm:tree-like} yields a safe tree decomposition. 
\end{restatable}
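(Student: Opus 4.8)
The plan is to argue by contradiction. Suppose the decomposition $\Tmf$ produced by $(\mathbf{R}_0)$--$(\mathbf{R}_3)$ is not safe, so it has an infinite downward path $v_0,v_1,v_2,\dots$ all of whose nodes carry the same transitive role name, say $\bago(v_j)=r\in\troles$. Among the four rules, only $(\mathbf{R}_3)$ creates a child bag with the same transitive role as its parent: $(\mathbf{R}_1)$ switches to a non-transitive role, $(\mathbf{R}_2)$ to a \emph{different} transitive role, and $(\mathbf{R}_0)$ emanates only from the root. Hence every edge $v_j\to v_{j+1}$ along this path is an application of $(\mathbf{R}_3)$ for $r$. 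I will extract from the originals along this path an infinite, strictly descending chain in the $r$-reachability order of the finite interpretation $\Ii$, which is impossible.

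The governing structure is the condensation of $\Ii$ with respect to $r$. Since $\Ii$ is a model of a $\SOQ$ KB, $r^\Ii$ is transitive, so the $r$-clusters $Q^\Ii_r(\cdot)$ are exactly the strongly connected components of $r^\Ii$, and $r$-reachability induces a partial order on them; write $[e]\preceq[d]$ when $e$ is $r$-reachable from $d$ in $\Ii$. Two facts drive the argument. First, every element of $\rel^\Ii_r(d)$ is $r$-reachable from $d$: this follows by induction from the definition of $\rel^\Ii_r$, since $Q^\Ii_r(d)$ consists of elements $r$-reachable from $d$, and the closure step only adds clusters $Q^\Ii_r(f)$ for $f$ an $r^\Ii$-successor of an already-reachable element (here $r^{\Ii^*}=r^\Ii$ by transitivity). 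Second, the new successors introduced by $(\mathbf{R}_3)$ strictly descend: by minimality of $W$, each $e\in W\setminus(\rel^\Ii_r(d)\cup\Delta)$ is an $r$-successor of $d$ demanded by some at-least restriction $A\sqsubseteq\qnrgeq n r B$, so $[e]\preceq[d]$; and since $e\notin\rel^\Ii_r(d)\supseteq Q^\Ii_r(d)$ we get $[e]\neq[d]$, hence $[e]\prec[d]$ strictly. Here it is essential that $\SOQ$ has no inverse roles, so $(\mathbf{R}_3)$ adds only forward $r$-successors and the $\exists r^-.B$ clause is vacuous.

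Now I track the originals. Let $d_j$ be the original of the element processed when $(\mathbf{R}_3)$ is applied to $v_j$, and let $e_j$ be the original of the new successor leading to $v_{j+1}$. By $(\mathbf{R}_3)$, the elements fresh in $v_{j+1}$ are copies of $\rel^\Ii_r(e_j)\setminus(\rel^\Ii_r(d_j)\cup\Delta)$; since the parent $v_j$ already carries role $r$, the element processed in $v_{j+1}$ is one of these, so $d_{j+1}\in\rel^\Ii_r(e_j)$, whence $[d_{j+1}]\preceq[e_j]$ by the first fact. Combining with the second fact gives $[d_{j+1}]\preceq[e_j]\prec[d_j]$, i.e.\ $[d_{j+1}]\prec[d_j]$ for every $j$. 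This yields an infinite strictly descending chain $[d_0]\succ[d_1]\succ[d_2]\succ\cdots$ of $r$-clusters of $\Ii$; as $\Ii$ is finite it has only finitely many clusters, a contradiction. Therefore no such monochromatic infinite path exists, and $\Tmf$ is safe.

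The main obstacle is establishing the strict-descent step cleanly: one must verify that relevant-successor sets stay within the forward $r$-reachable part (so that reachability is respected) and that the newly demanded successors genuinely leave the current cluster. Both hinge on the absence of inverse roles in $\SOQ$ and on the minimality of $W$ in $(\mathbf{R}_3)$; the remaining bookkeeping---identifying which element of $v_{j+1}$ is processed and noting that an infinite path forces a fresh successor at every step---is routine.
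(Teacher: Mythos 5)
Your proof is correct and follows essentially the same route as the paper's: both observe that only $(\mathbf{R}_3)$ produces a child with the same transitive role, that in \SOQ (no inverse roles) the newly demanded witness $e$ is a forward $r$-successor of $d$ lying outside $Q^\Ii_r(d)$, and hence that each step descends strictly in the finite order of $r$-clusters of $\Ii$. Your write-up merely makes explicit the chaining $[d_{j+1}]\preceq[e_j]\prec[d_j]$ that the paper leaves implicit when it bounds monochromatic path lengths by the number of $r$-clusters.
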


To prove the converse implication we begin from a carefully chosen
counter-witness with a safe canonical tree decomposition. It is well
known that each regular set of trees contains a \emph{regular tree},
i.e., a tree with finitely many non-isomorphic
subtrees. Hence, if there is a counter-witness with a safe canonical
tree decomposition, there is also one with a regular safe canonical
tree decomposition. Let $\Tmf = (T,\Imf)$ be such a tree decomposition
of some counter-witness $\Ii$, and let $\Mm$ be the interpretation
guaranteed by Definition~\ref{def:canonical}. 

Let us restructure $\Tmf$ by iteratively merging neighboring nodes
associated to the same transitive role name: pick a node $v$ with a
child $w$ such that $\bago(v)=\bago(w) \in \troles$, redefine
$\Imf(v)$ as $\Imf(v)\cup\Imf(w)$, remove $w$ from $\Tmf$, and promote
all children of $w$ to children of $v$. As a result we obtain a
canonical tree decomposition $\Smf= (S,\Imf)$ of $\Ii$. By
construction,  $\Smf$ is \emph{strongly canonical}: no neighboring
nodes in $\Smf$ are associated with the same transitive role name.
Hence, for each node $w$ with
parent $v\neq \varepsilon$, $\Delta_v \cap \Delta_w \setminus
\Delta^\Mm= \{d_w\} $ for some $d_w \in F(v)$.

Each regular safe tree decomposition has bounded length of downward
paths of nodes associated with the same transitive role
name. Consequently, the restructuring above keeps the outdegree and
the width bounded.

\begin{restatable}{lemma}{lemmasoqwidth}
\label{lem:soq-width}
$\Smf$ has bounded degree and width.
\end{restatable}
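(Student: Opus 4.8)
The plan is to first bound, using regularity together with safety, the length of every downward path in $\Tmf$ whose nodes all carry the same transitive role name, and then to observe that each node of $\Smf$ is obtained by merging a region of $\Tmf$ of correspondingly bounded size. Throughout, let $N$ be the common bound on the width and outdegree of $\Tmf$ guaranteed by Theorem~\ref{thm:tree-like}, so that every bag of $\Tmf$ has at most $N+1$ elements and every node of $T$ has at most $N$ children.

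The main step is the claim that there is a bound $M$ such that every downward path $w_0, w_1, \dots, w_k$ in $T$ (with $w_{i+1}$ a child of $w_i$) satisfying $\bago(w_0) = \cdots = \bago(w_k) = r$ for a single $r \in \troles$ has $k < M$. Since $\Tmf$ is regular, its underlying labeled tree has only finitely many subtrees up to isomorphism; let $M$ be this number. Note that $\bago$ is recoverable from the node label (it is the first component of the alphabet symbol), so any isomorphism of labeled subtrees preserves $\bago$. Suppose for contradiction that some constant-role path as above has $k \geq M$. Then among $w_0, \dots, w_k$ two nodes $w_i, w_j$ with $i < j$ root isomorphic labeled subtrees. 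Fix such an isomorphism $\phi$; it maps the segment $w_i, \dots, w_j$ (all of role $r$) onto a downward segment $w_j = \phi(w_i), \dots, \phi(w_j)$ of the same length, again entirely of role $r$, and it maps the subtree at $w_j$ isomorphically onto the subtree at $\phi(w_j)$. Iterating $\phi$ therefore produces an infinite downward path all of whose nodes carry $r$, contradicting safety (Definition~\ref{def:safe}). This proves the claim.

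It remains to transfer the bound to $\Smf$. Each node $V$ of $\Smf$ arises by contracting a maximal set $T_V \subseteq T$ of nodes connected by parent--child edges all joining two nodes of the same transitive role $r$; the root and every node with $\bago \in \ntroles$ form singleton regions, since the merging rule contracts only transitive-role edges. For a transitive region, $T_V$ is a subtree of $T$ all of whose nodes carry $r$, so every downward path inside $T_V$ is a constant-role path and, by the claim, has at most $M$ nodes, while the branching of $T_V$ is bounded by the outdegree $N$ of $T$. Consequently $|T_V| \leq (M+1)\,N^{M}$, a bound depending only on $N$ and $M$. The bag of $V$ satisfies $|\Delta^{\Imf(V)}| = |\bigcup_{u \in T_V} \Delta_u| \leq |T_V| \cdot (N+1)$, which bounds the width of $\Smf$; and the children of $V$ in $S$ are exactly the $T$-children of nodes of $T_V$ that fall outside $T_V$, so their number is at most $|T_V| \cdot N$, which bounds the outdegree of $\Smf$. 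Both bounds are independent of the (infinite) tree, as required.

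The delicate point is the main claim: safety by itself only forbids \emph{infinite} constant-role paths and is compatible with finite constant-role paths of every length, so regularity is genuinely needed to rule out arbitrarily long ones. The pumping argument above is exactly where regularity enters, and the care required is to ensure that the pumped path is strictly descending and stays within the role $r$, which is precisely what preservation of $\bago$ under subtree isomorphism provides.
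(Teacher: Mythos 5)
Your proof is correct and follows essentially the same route as the paper's: use regularity to bound the length of constant-role downward paths (two isomorphic subtrees on such a path would pump to an infinite constant-role path, contradicting safety), then observe that each node of $\Smf$ is the contraction of a same-role subtree of $\Tmf$ of bounded height and outdegree, hence of bounded size, which bounds both width and outdegree. Your version merely makes the pumping step and the numerical bounds more explicit than the paper does.
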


We can now easily turn $\Ii^*$ into a finite model of $\Kk$. Suppose
that on each path of $\Smf$, we fix a node $v$ and its ancestor $u$
(neither $\varepsilon$ nor a child of $\varepsilon$) such that
$\bago(v)=\bago(u)$, $\Imf(v) \simeq \Imf(u)$, and the witnessing
isomorphism $h$ maps $d_v$ to $d_u$ and is identity over
$\Delta^\Mm $. Suppose also that for each element in $\Delta^\Mm$, all
witnesses required by at-least restrictions can be found among
elements of $\Delta^\Mm$ and elements that do not occur in the
subtrees of $\Smf$ rooted at the chosen nodes $v$. Note that for each
path we can find such a pair of nodes, because the sizes of the bags
are bounded in $\Smf$. 
We shall modify $\Ii$ by removing parts of it and redirecting edges
previously leading to the removed parts. Pick any path such that the
corresponding $d_v$ has not been processed yet and has not been
removed. Remove from $\Ii$ the union of $\Delta_w$ with $w$ ranging
over descendents of $v$ (including $v$), keeping only $\Delta^\Mm$ and
$d_v$.
Replace each $\bago(v)$-edge leading from $d_v$ to a removed element
$e\in\Delta_v$, with an $\bago(v)$-edge leading from $d_v$ to $h(e)$.
Repeat until no such path exists. The resulting
interpretation $\Jj$ is obviously finite. Checking correctness is
routine.

\begin{restatable}{lemma}{lemmasoqmodel}
\label{lem:soq-model}
$\Jj^* \models \Kk\,$.
\end{restatable}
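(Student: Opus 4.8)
The plan is to verify that the finite interpretation $\Jj$ constructed just before the lemma satisfies $\Jj^* \models \Kk$, by checking each of the three requirements for modelling a KB: $\Jj^* \models \Amc$, $\Jj^* \models \Tmc$, and transitivity of the relevant roles in $\Jj^*$. Since $\Jj$ was obtained from $\Ii$ (a counter-witness, hence a model of $\Amc$ and $\Tmc$ whose transitive closure is a counter-model) by a finite sequence of surgeries that remove subtrees and redirect $\bago(v)$-edges along the witnessing isomorphism $h$, the core of the argument is that each such surgery preserves satisfaction of the TBox and the ABox while the starred interpretation keeps the required transitive roles transitive.

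First I would observe that the ABox is unaffected: all of $\Ind(\Kk)$ and $\Nomi(\Kk)$ live inside $\Delta^\Mm \subseteq \Delta_\varepsilon$, and the surgery only removes elements from strict descendants $v \neq \varepsilon$ of the root (indeed neither $\varepsilon$ nor a child of $\varepsilon$ is chosen), while $h$ is the identity on $\Delta^\Mm$. Hence $\Jj$ and $\Ii$ agree on the restriction to $\Ind(\Kk)$, so $\Jj \models \Amc$ and therefore $\Jj^* \models \Amc$. Transitivity is immediate by the definition of $\Jj^*$, which takes the transitive closure of each $r \in \troles$; and $\bago$ assigns a single transitive role per bag, so no spurious edges between distinct transitive roles are created by the redirection.

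The substance is checking $\Jj^* \models \Tmc$, which by the normal form reduces to checking each CI locally at each element $d \in \Delta^\Jj$. The key point is that the chosen node $v$ and ancestor $u$ satisfy $\bago(v) = \bago(u) = r$ with $\Imf(v) \simeq \Imf(u)$ via $h$ mapping $d_v \mapsto d_u$ and fixing $\Delta^\Mm$; so after redirecting every $r$-edge from $d_v$ to a removed $e \in \Delta_v$ to instead point at $h(e)$, the element $d_v$ inherits exactly the $r$-successor concept-memberships that $d_u$ had. By canonicity (Conditions~\ref{it:can3} and~\ref{it:can4}), at-most restrictions on the transitive role $r$ are witnessed entirely within $\rel^{\Imf(v)}_r$, which is preserved under the isomorphism, so no at-most restriction is violated by the redirection; at-least restrictions $A \sqsubseteq \qnrgeq{n}{r}{B}$ and $\exists r^-.B$ at $d_v$ are satisfied because $d_u$ satisfied the corresponding one in the isomorphic copy, and the redirected edges reproduce those successors. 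For elements of $\Delta^\Mm$, the extra hypothesis on the chosen nodes guarantees that all witnesses required by at-least restrictions survive the removal (they lie in $\Delta^\Mm$ or outside every chosen subtree), so these too remain satisfied. The remaining concept-inclusion shapes and at-most restrictions on non-transitive roles are preserved because the surgery neither adds non-transitive edges nor changes concept memberships of surviving elements.

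\textbf{The main obstacle} is arguing that the iterated, interacting surgeries do not interfere: removing one subtree could in principle destroy a witness that a later surgery relies upon, or a redirected $r$-edge could combine under transitive closure with other $r$-edges to produce a new, spurious at-most-restriction violation at some third element. I would control this by noting that the surgeries act on pairwise incomparable subtrees along distinct paths (we process a $d_v$ only if it has not been removed, and removed subtrees are disjoint), and that each redirection replaces $d_v$'s $r$-reach by an isomorphic copy of $d_u$'s, so under the transitive closure the set of $r$-successors of any surviving element is mapped homomorphically back into $\Ii^*$. This homomorphism into the original $\Ii^*$ is the clean way to transfer satisfaction of all universal consequences and at-most restrictions; combined with the explicit witness-preservation for at-least restrictions, it yields $\Jj^* \models \Tmc$, completing the proof that $\Jj^* \models \Kk$.
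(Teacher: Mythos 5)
The paper offers no written proof of this lemma---after describing the surgery it simply declares ``checking correctness is routine'' and the appendix contains no proof of \lemmasoqmodel---so your proposal is filling a real gap, and it does so along exactly the intended lines: the ABox and transitivity are immediate; at-most restrictions over the transitive role are controlled via the relevant-successor sets guaranteed by \ref{it:can3}/\ref{it:can4}, which the isomorphism $h$ preserves; at-least restrictions at $d_v$ are inherited from $d_u$ through the redirected edges; and the extra hypothesis on $\Delta^\Mm$ handles the nominal witnesses. Two caveats on your final paragraph. First, the closing appeal to a homomorphism from $\Jj^*$ into $\Ii^*$ as ``the clean way to transfer \dots{} at-most restrictions'' is not sound: a homomorphism may identify distinct $r$-successors in the image, so a bound of $n$ on $B$-successors of $g(d)$ in $\Ii^*$ does not bound the number of $B$-successors of $d$ in $\Jj^*$; moreover it is unclear that the redirected interpretation maps homomorphically into $\Ii$ at all, since the new edge $(d_v,h(e))$ has no counterpart under the identity map and sending $d_v\mapsto d_u$ breaks $d_v$'s incoming edges. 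Fortunately this remark is redundant---your earlier canonicity-based count is the correct argument and does not need it. Second, the processed subtrees are not literally pairwise incomparable: a later chosen node lying above an earlier one swallows the already-surgered region; this is harmless because everything inside is discarded wholesale and strong canonicity confines the interface to the single kept element and $\Delta^\Mm$, but the non-interference argument should be phrased that way rather than via disjointness.
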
 



To ensure that $\Jj^* \notmodels \varphi$ we need to choose the nodes
$v$ and $u$ more carefully. Relying on $\varphi$ being a PEQ,
 not an arbitrary PRPQ,
we apply the colored blocking principle
\cite{GogaczIM18}: to keep $u$ and $v$ sufficiently similar and
sufficiently far apart, we look at their neighborhoods
of sufficiently large radius and use additional coloring to
distinguish elements within each neighborhood (see Appendix for details).


\section{IQ Entailment}\label{sec:finsat-siq}

We also get the following results on IQ entailment.

\begin{theorem}\label{thm:IQSIQ}
  Finite and unrestricted IQ entailment is $\textnormal{\sc co}$\NExpTime-complete over
  \SOQ KBs.
  Unrestricted IQ entailment over \SIQfwd KBs and
  finite IQ entailment over \SIQfwd KBs restricted to a single
  transitive role and no non-transitive roles is
  \twoexp-complete.
%
\end{theorem}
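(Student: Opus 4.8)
The plan is to decide all the problems in the statement by reducing IQ entailment to (finite) satisfiability, using the standard equivalence: $\Kmc\models C(a)$ iff $\Kmc\cup\{X(a)\}$ is unsatisfiable, where $X$ is a fresh concept name and $\Tmc$ is extended by a normalised axiomatisation of $X\sqsubseteq\neg C$. This transformation is polynomial and stays within \SOQ (respectively \SIQfwd), since both logics are closed under concept negation, and the very same equivalence holds over finite models (with $\fentails$ in place of $\models$). Hence every (finite) satisfiability upper bound transfers to a (finite) IQ-entailment upper bound, and conversely hardness of (finite) unsatisfiability yields hardness of (finite) IQ entailment. All four claims will be obtained by instantiating this scheme.

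For \SOQ I would first establish a single-exponential bounded-model property for both finite and unrestricted satisfiability — refining the satisfiability analysis known for \SQ with nominals — which places (finite) satisfiability in \NExpTime and therefore (finite) IQ entailment in $\textnormal{\sc co}\NExpTime$. The matching $\textnormal{\sc co}\NExpTime$ lower bound is inherited directly from \NExpTime-hardness of (finite) \SOQ satisfiability: a KB $\Kmc$ is (finitely) unsatisfiable iff $\Kmc$ (finitely) entails $(A\sqcap\neg A)(a)$ for any $a\in\Ind(\Kmc)$, which is a genuine instance of IQ entailment, so hardness propagates.

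For the unrestricted \SIQfwd upper bound, the work is essentially already done in Section~\ref{sec:omqa}: satisfiability is the special case of Lemma~\ref{lem:automaton} obtained by dropping the factor $\Amf_{\neg\varphi}$, since the unravelling in the proof of Theorem~\ref{thm:tree-like} turns \emph{any} model of $\Kmc$ into one with a canonical tree decomposition whose transitive closure still models $\Kmc$. Thus a non-emptiness test of $\Amf_\Kmc$ decides satisfiability in \twoexp, confirming the conjectured decidability of \SIQfwd, and IQ entailment follows by the reduction above. The $\twoexp$ lower bound is the crux of the theorem. I would reduce from a \twoexp-complete problem such as the word problem of $2^n$-space-bounded alternating Turing machines (equivalently, a doubly exponential corridor-tiling game), building a \SIQfwd KB with a \emph{single} transitive role $r$ whose models encode the computation tree: number restrictions of the form $\qnrleq{n}{r}{B}$ and their at-least counterparts enforce the $n$-bit cell addressing of each $2^n$-long configuration and the branching into successor configurations, while controlled inverses $\exists r^{-}.B$ propagate tape contents from cells back to their configuration node and synchronise consecutive configurations. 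The KB would be arranged so that a designated individual $a$ is consistent exactly when an accepting computation is encoded, whence $(A\sqcap\neg A)(a)$ is entailed iff no accepting computation exists. Because accepting computations are finite trees and rejection blocks even infinite models, the reduction would be engineered to be insensitive to finiteness and to use only one transitive role, thereby delivering the lower bound simultaneously for the unrestricted case and for the finite single-role case.

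For the finite \SIQfwd upper bound (single transitive role, no non-transitive roles) I would mirror Section~\ref{sec:fomqa-soq}. Following Definition~\ref{def:safe}, I would prove that a counter-witness of such a KB has a finite model iff it admits a \emph{safe} canonical tree decomposition — the \SIQfwd analogues of Lemmas~\ref{lem:soq-safe}, \ref{lem:soq-width}, and \ref{lem:soq-model}, where restricting to a single transitive role keeps the merging-and-folding argument tractable despite inverse roles — and then recognise safe decompositions with the product of $\Amf_\Kmc$ and the quadratic-size B\"uchi safety automaton, testing emptiness in \twoexp. Combined with the shared lower bound this yields \twoexp-completeness. I expect the main obstacle to be precisely this $\twoexp$ lower bound: designing an encoding in which \emph{plain instance queries}, rather than conjunctive or regular path queries, already detect errors in a doubly exponential computation, while respecting the controlled-inverse discipline (existentials but no counting on inverse transitive roles) and surviving the finite-model restriction to a single transitive role. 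A secondary difficulty is the finite-model characterisation via safe decompositions, since inverse roles make the folding argument that worked for \SOQ considerably more delicate.
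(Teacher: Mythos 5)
Your overall strategy coincides with the paper's on three of the four fronts: both reduce (finite) IQ entailment to the complement of (finite) KB satisfiability; both obtain the \SOQ bounds from the known \NExpTime-completeness of (finite) \SOQ satisfiability (the paper simply cites it rather than re-deriving a bounded-model property); both get the unrestricted \SIQfwd upper bound from the automaton of Section~\ref{sec:omqa}; and both prove the shared \twoexp lower bound by encoding $2^n$-space-bounded alternating Turing machines over a \emph{single} transitive role, using counting over $r$ to synchronise exponential-size configurations and controlled inverses $\exists r^-$ for the skeleton, with the reduction arranged so that satisfiability and finite satisfiability coincide (in the paper, by folding repeated configurations of the space-bounded machine).

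The genuine divergence is the finite \SIQfwd upper bound for the single-transitive-role, no-non-transitive-role fragment, and there your route is both riskier and unnecessary. The paper does \emph{not} redo the safe-decomposition machinery of Section~\ref{sec:fomqa-soq} for \SIQfwd; it observes that this fragment is a notational variant of the graded modal logic $\mathbf{K4}(\Diamond_{\geq},\Diamond^{\shortminus})$ and invokes a result of Bednarczyk and Kiero\'nski by which satisfiability and finite satisfiability coincide for it, so the finite case collapses to the unrestricted one already settled. Your plan to prove \SIQfwd analogues of Lemmas~\ref{lem:soq-safe}--\ref{lem:soq-model} hits exactly the obstacle the paper flags in its outlook: with inverse existentials, rule $(\mathbf{R}_3)$ may descend via $r$-predecessors as well as $r$-successors, so the cluster-counting argument behind Lemma~\ref{lem:soq-safe} breaks, and the natural safety condition no longer yields strongly canonical decompositions — yet strong canonicity (neighbouring bags sharing a single non-nominal element) is precisely what the edge-redirection step in the folding argument relies on. You correctly identify this as the delicate point, but as written it is an unclosed gap; the finite-model-property shortcut is the intended fix.
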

\begin{proof}

As is well-known, (finite) IQ entailment reduces to the
complement of (finite) KB satisfiability; we focus on
the latter. For \SOQ, we simply use the facts
that (finite) KB satisfiability for \SOQ is
\NExpTime-complete~\cite{GuIbJu-AAAI17} and that the lower bound holds
in the finite~\cite{KazakovP09}.

The \twoexp upper bound for
unrestricted KB satisfiability follows from Theorem~\ref{thm:main-omqa}, and 
for finite satisfiability in the above fragment of \SIQfwd  follows
from Theorem~\ref{thm:main-omqa} and a recent result
by~\citeauthor{BKieronski18}~[\citeyear{BKieronski18}], implying that
satisfiability and finite satisfiability coincide for this fragment of
$\SIQfwd$. This approach cannot be generalized to full \SIQfwd since
\SIQfwd lacks the finite model property.

We next show that these upper bounds are tight by reducing the word problem for
$2^n$-space bounded \emph{alternating Turing machines (ATMs)},
which is known to be \twoexp-hard~\cite{chandraAlternation1981}.  An
ATM $M=(Q,\Theta,\Gamma,q_0,\Delta)$ consists of the set  $Q$ of
states partitioned into \emph{existential states}~$Q_{\exists}$ and \emph{universal
states}~$Q_{\forall}$, the input alphabet $\Theta$, the 
tape alphabet $\Gamma$,  the \emph{starting state} $q_0\in Q_{\exists}$, and
the \emph{transition relation} $\Delta$.  Without loss of
generality, we assume that 
each of $M$'s configurations has exactly two successor configurations,
universal and existential states alternate, and $M$ accepts a word iff
there is an infinite (alternating) run.

Given $M,w$, we construct in polynomial time a knowledge base
$\Kmc=(\Tmc,\{I(a_0)\})$ using a single transitive role name $r$ such
that $M$ accepts $w$ iff $\Kmc$ is satisfiable. We represent
configurations of size $2^n$ in the leaves of binary trees of depth
$n$. To this end, we use concept names $X_0,\ldots, X_{n-1}$
(representing bits of an exponential counter) and $L_0, \ldots, L_{n}$
(for the levels of the tree), and include the following CIs for all
$i,j$ with $0\leq i<n$ and $i<j\leq n$:
\begin{align*}
  L_i &\sqsubseteq  \exists r. (X_i\sqcap L_{i+1}) \sqcap \exists r.
  (\neg X_i\sqcap L_{i+1}) \\
  %
  %
  L_{i+1} \sqcap X_i & \sqsubseteq \forall r.(L_j \rightarrow X_i ) \\
  L_{i+1} \sqcap \neg X_i & \sqsubseteq \forall r.(L_j \rightarrow
  \neg X_i)
  %
  %
\end{align*}
It should be clear that every model of $L_0$ and these CIs contains a
full binary tree of depth $n$ such that the \emph{leaves}, that is,
the elements satisfying $L_n$, correspond to the numbers
$0,\ldots,2^n-1$ in the natural way, via concept names $X_i$.
Let $\Sigma=\Gamma\cup(Q\times \Gamma)$ be the set of possible labels of a
cell in $M$'s computation, and introduce concept names $C^x_\sigma$
for every $\sigma\in \Sigma$, $x\in \{l,h,r\}$. Every leaf with number $i$
is labeled with three concepts $C^l_{\sigma_1}, C^h_{\sigma_2},
C^r_{\sigma_3}$ representing the cells $i-1,i,i+1$ of a
configuration using the CI:
\begin{align*}
  L_n & \sqsubseteq \textstyle\bigsqcap_{x\in\{l,h,r\}}\bigsqcup_{\sigma\in
  \Sigma} \big( C_\sigma^x \sqcap \bigsqcap_{\sigma'\neq \sigma}\neg
  C_{\sigma'}^x\big) 
\end{align*}
We use these trees as follows. The concept name $I$ enforces
a skeleton structure modeling an alternating
computation using the following CIs, for $i\in\{1,2\}$: 
\begin{align*}
  I \sqsubseteq A_{\exists}^1\quad\quad 
  A_{\exists}^i & \sqsubseteq L_0\sqcap 
  \exists r^-.(B_{\exists}\sqcap \exists r.A_{\forall}) \\
  %
  A_{\forall} & \sqsubseteq L_0\sqcap 
  \exists r^-.(B_{\forall}^i\sqcap \exists r.A_{\exists}^i)
  %
  %
\end{align*}
Thus, every model of $I$ contains the following structure, where every triangle
represents one of the described trees, and $A_\forall$
($A_{\exists}^i$) marks universal (existential) configurations:
 
\begin{tikzpicture}
  \node[shape=circle,fill,draw,inner sep=1pt,label=above:{\scriptsize
  $I,A_{\exists}^1,L_0$}] (r1) at (0,0) {};
  \node[shape=circle,fill,draw,inner
  sep=1pt,label=above:{\scriptsize$B_\exists$},label=below:{\scriptsize{$d$}}] (r2) at (1,0) {};
  \node[shape=circle,fill,draw,inner
  sep=1pt,label=above:{\scriptsize\begin{tabular}{c}$A_\forall,$\\$L_0$\end{tabular}}] (r3) at (2,0) {};
  \node[shape=circle,fill,draw,inner
  sep=1pt,label=above:{\scriptsize$B_{\forall}^1$}] (r4) at (2.7,0.7) {};
  \node[shape=circle,fill,draw,inner
  sep=1pt,label=above:{\scriptsize$B_{\forall}^2$}] (r5) at (2.7,-.7) {};
  \node[shape=circle,fill,draw,inner sep=1pt,label=above:{\scriptsize
  $A_{\exists}^1,L_0$}] (r6) at (3.7,0.7) {};
  \node[shape=circle,fill,draw,inner sep=1pt,label=above:{\scriptsize
  $A_{\exists}^2,L_0$}] (r7) at (3.7,-.7) {};
  \node[shape=circle,fill,draw,inner sep=1pt,label=above:{\scriptsize
  $B_{\exists}$}] (r8) at (4.7,0.7) {};
  \node[shape=circle,fill,draw,inner sep=1pt,label=above:{\scriptsize
  $A_{\forall},L_0$}] (r9) at (5.7,0.7) {};
   \node[shape=circle,fill,draw,inner sep=1pt,label=above:{\scriptsize
  $B_{\exists}$}] (r10) at (4.7,-.7) {};
  \node[shape=circle,fill,draw,inner sep=1pt,label=above:{\scriptsize
  $A_{\forall},L_0$}] (r11) at (5.7,-.7) {};
  \node at (6.6,.7) {$\ldots$};
  \node at (6.6,-.7) {$\ldots$};
  \node at (0,-.5) {$T_1$};
  \node at (2,-.5) {$T_2$};
   
  \draw[->,>=stealth] (r2) -- (r1);
  \draw[->,>=stealth] (r2) -- (r3);
  \draw[->,>=stealth] (r4) -- (r3);
  \draw[->,>=stealth] (r4) -- (r6);
  \draw[->,>=stealth] (r5) -- (r3);
  \draw[->,>=stealth] (r5) -- (r7);
  \draw[->,>=stealth] (r8) -- (r6);
  \draw[->,>=stealth] (r8) -- (r9);
  \draw[->,>=stealth] (r10) -- (r7);
  \draw[->,>=stealth] (r10) -- (r11);
  \draw[<-,>=stealth] (r11) -- (6.2,-.9);
  \draw[<-,>=stealth] (r11) -- (6.2,-.5);
  \draw[<-,>=stealth] (r9) -- (6.2,.9);
  \draw[<-,>=stealth] (r9) -- (6.2,.5);

  \draw (0,0) -- (0.4,-0.7) -- (-0.4,-.7) -- cycle; 
  \draw (2,0) -- (2.4,-0.7) -- (1.6,-.7) -- cycle; 
  \draw (3.7,-.7) -- (4.1,-1.4) -- (3.3,-1.4) -- cycle; 
  \draw (3.7,0.7) -- (4.1,0) -- (3.3,0) -- cycle; 
  \draw (5.7,-.7) -- (6.1,-1.4) -- (5.3,-1.4) -- cycle; 
  \draw (5.7,0.7) -- (6.1,0) -- (5.3,0) -- cycle; 
 
\end{tikzpicture}

\noindent It remains to ensure that \textit{(i)} the leaf labeling in
every tree is actually a configuration,~\textit{(ii)} neighboring
trees describe successor configurations, and~\textit{(iii)} the first
tree is labeled with the initial configuration. We concentrate
on~\textit{(ii)}, as~\textit{(i)} is similar and~\textit{(iii)} is
straightforward. We illustrate the idea on $T_1$ and $T_2$ in the
figure. In~$T_1$, we enforce in every leaf an
$r$-successor satisfying the label of that cell in the successor
configuration (computable from the $C^x_\sigma$).  In~$T_2$, we
enforce in every leaf
an $r$-successor with the current label $C^h_\sigma$. Both in $T_1$
and $T_2$, these additional elements satisfy a fresh concept name $S$
and have the same counter value as in the leaves.  
Observe that, by transitivity, all $2\cdot 2^n$ created nodes are `visible' from
$d$ satisfying $B_\exists$ in the figure. By including the CI
$B_\exists\sqsubseteq \qnrleq{2^n}{r}{S}$, $S$-elements with the same
counter value from $T_1$ and $T_2$ are forced to identify, thus
achieving the desired synchronization.
Having~\textit{(i)}--\textit{(iii)} in place, it is routine to show
that \Kmc is satisfiable iff $M$ accepts $w$. 
The lower bound applies to finite satisfiability, since $\Kmc$ is
satisfiable iff it is finitely satisfiable. 
\end{proof}

 



\section{Outlook}
This paper makes a step towards a complete picture of query
entailment in DLs with number restrictions on transitive roles.
%
There are several natural next steps involving finite entailment.
The first is to cover full \SIQfwd. A more challenging goal is to go
beyond instance queries: an immediate obstacle is that the natural
safety condition for \SIQfwd does not guarantee strongly canonical
decompositions. Covering full PRPQs even just for \SQ seems to require
generalizing the colored blocking principle, or finding an entirely
different tool.

\section*{Acknowledgments}
 
Gogacz and Murlak were funded by  Poland's National Science Centre
grant 2018/30/E/ST6/00042, Ib\'a\~nez-Garc\'ia by ERC Starting grant
637277 FLEXILOG, and Jung by ERC Consolidator grant 647289 CODA. This
work was also supported by the OeAD WTZ project PL 15/2017.

\bibliographystyle{named}

\bibliography{references}

\clearpage
\appendix

\section{Additional Preliminaries}

We define the semantics of PRPQs as follows:
\begin{itemize}

  \item $\Imc,\pi\models \psi_1\vee\psi_2$ iff $\Imc,\pi\models \psi_1$ or
    $\Imc,\pi\models \psi_2$;

  \item $\Imc,\pi\models \psi_1\wedge\psi_2$ iff $\Imc,\pi\models
    \psi_1$ and $\Imc,\pi\models \psi_2$;

  \item $\Imc,\pi\models \Emc(t,t')$ iff
    $(\pi(t),\pi(t'))\in\Emc^\Imc$, with $\Emc^\Imc$ defined as
    follows: 
    \begin{align*}
      (r^-)^\Imc & = \{ (e,d) \mid (d,e)\in r^\Imc \} \\
      (A?)^\Imc &= \{(d,d)\mid d\in A^\Imc\} \\ 
      (\Emc^*)^\Imc & = (\Emc^\Imc)^* \\
      (\Emc\cup \Emc')^\Imc & = \Emc^\Imc\cup {\Emc'}^{\Imc} \\
      (\Emc\circ \Emc')^\Imc & = \Emc^\Imc\circ {\Emc'}^{\Imc}
     \end{align*}
\end{itemize}

\medskip
A \emph{tree} is a prefix-closed subset $T\subseteq
(\mathbbm{N}\setminus\{0\})^*$. A node $w\in T$ is a \emph{successor} of
$v\in T$ and $v$ is a \emph{predecessor} of $w$ if $w=v\cdot i$ for some
$i\in \mathbbm{N}$. We say  that the node $\varepsilon$ is the \emph{root} of $T$.

\medskip
A \emph{(non-deterministic) tree automaton} with \emph{B\"uchi acceptance condition}
is a tree automaton enriched with a set of accepting states $F
\subseteq Q$. A run of such an automaton is considered accepting if
on each branch accepting states occur infinitely often.

\subsection*{Normal Form}

As stated in the main part of the paper we assume normalized  KBs, such that each CIs in the TBox takes one of the following forms:
\begin{align*}
  \textstyle\bigsqcap_i A_i \sqsubseteq \bigsqcup_j B_j,
  \quad A \sqsubseteq \forall r^-. B, \quad A\sqsubseteq \exists
  r^-.B, \\ 
  \quad A \sqsubseteq \qnrleq n s B,\quad A \sqsubseteq \qnrgeq n s B,
\end{align*}
where $A,A_i,B,B_j$ are concept names or nominals, $r\in\roles$, $s$
is a non-transitive role or a transitive role name, and empty
disjunction and conjunction are equivalent to $\bot$ and $\top$,
respectively. This can be assumed w.l.o.g. since every \SIQfwd or \SOQ
TBox can be transformed into a normalized one by extending its
signature with an appropriate number (linear on the size of the TBox)
of fresh concept names.  We further assume that for every at-most
restriction \qnrleq n s B, \Tmc also contains the following
\begin{equation} \label{eq:NF}
     A \sqsubseteq \qnrleq n s B \,, \;
     A'  \sqsubseteq \qnrgeq {n+1} s B  \,,\;
   \top \sqsubseteq A \sqcup A'
\end{equation}
 with $A$ a concept name, not occurring on the left-hand-side of any
 other CI.  We make an analogous assumption for at-least
 restrictions.
 

\section{Additional Proofs for Section~\ref{sec:tree-likeMP}}
\label{sec:tree-likeMP-appendix}

\lemrelevant*

\begin{proof} 
  Point~1 is a consequence of the definition of $\rel_r^\Imc(d)$.

  For Point~2, let us denote with $\mn{atm}(\Tmc)$ the set of all
  at-most restrictions occurring in $\Tmc$. We
  say that $e\in \Delta^\Imc$ is \emph{directly
  relevant for $d\in\Delta^\Imc$} if there is some
  $\qnrleq{n}{r}{B}\in \Tmc$, such that $d\in \qnrleq n r
  B^{\Imc^*}$, $e\in B^\Imc$, and $(d,e)\in r^{\Imc^*}$. We further
  denote with $X^\Imc_r(d)$ the smallest set that contains $d$ and is
  closed under direct relevant elements. Because $\Ii\models \Tmc$, we have
  $$\rel_r^\Imc(d) \subseteq \bigcup_{e\in X_{r}^\Imc(d)}Q_r^\Imc(e).$$ It thus suffices to prove that the size
  of $X_{r}^\Imc$ is bounded by $2^{\mn{poly}(|\Tmc|)}$.

  To see this, consider the directed tree
  $(V,E)$ with $V=X_{r}^\Imc(d)$ and $E$ is defined as follows. Start with
  setting $E$ the set of all $(d,e)$ such that $e$ is directly
  relevant for $d$ and apply the following step exhaustively. 
  \begin{itemize}

    \item[$(\ast)$] Choose leaf $e\in V$ and add, for all $f\in
      \Delta^\Imc$ directly relevant for $e$, but not for any of
      $e$'s predecessors an edge $(e,f)$ to $E$.
      
   \end{itemize}
   By definition of $V$ and direct relevance, $(V,E)$ is a connected tree. 
   Now, consider the labelling
   $\ell:V\to 2^{\mn{atm}(\Tmc)}$ given by
   $$\ell(e)=\{C\mid e\in {\qnrleq n r C}^{\Imc}, \qnrleq n r C
   \in\mn{atm}(\Tmc)\}.$$ 
   Let $(e,f)\in E$. By construction, we have
   \begin{itemize}

     \item[--] $\ell(e)\subseteq \ell(f)$ if $f$ is a leaf in
       $(V,E)$, and

     \item[--] $\ell(e)\subsetneq \ell(f)$ if $f$ is an inner node in $(V,E)$.

   \end{itemize}
   Thus, the depth of the tree $(V,E)$ is bounded by $|\Tmc|$.
   Observe moreover that also the outdegree of $(V,E)$ is bounded
   exponentially in $\Tmc$ by definition of direct relevance.
   Overall, we get that the size of $V=X_{r}^\Imc(d)$ is bounded by an exponential in
   \Tmc.
\end{proof}

\thmtreelike*

The construction of the decomposition $\Tmf$ described in the body of
the paper directly ensures that $\Tmf$ is a canonical decomposition of
width and outdegree appropriately bounded.  It remains to check that
the interpretation $\Jj$ underlying the decomposition $\Tmf$ satisfies
the conditions required in the statement of the theorem. 

\begin{claim}\label{cl:model}
 $\Jmc \models (\Tmc, \Amc)$ 
\end{claim}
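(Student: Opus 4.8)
The plan is to transfer satisfaction from the counter-model $\Ii$ to $\Jj$ along the homomorphism $h\colon\Jj\to\Ii$ sending each copy $d'$ to its original $d$, which was already exhibited in the body. The crucial auxiliary observation is that $h$ is \emph{exact on unary predicates}: since every bag $\Imf(w)$ is a renamed restriction of $\Ii$ whose signature contains $\concepts$, and since individuals and nominals are never copied, we have $d'\in A^\Jj$ iff $d\in A^\Ii$ for every concept name $A$, and $\{a\}^\Jj=\{a\}$ with $h(a)=a$ for every nominal. Edges, by contrast, are preserved only downward: $(d',e')\in r^\Jj$ implies $(d,e)\in r^\Ii$. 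With this in hand, $\Jj\models\Amc$ is immediate: concept assertions $A(a)$ and non-transitive role assertions $s(a,b)$ survive in the root bag (which carries signature $\concepts\cup\ntroles$ and keeps all individuals uncopied), while a transitive assertion $r(a,b)$ is retained in the $(\mathbf{R}_0)$-child for $r$, where both $a,b$ occur uncopied and the $r$-edge between them is inherited from $\Ii$. The \emph{universal} CIs are equally direct: for $\bigsqcap_iA_i\sqsubseteq\bigsqcup_jB_j$ one argues pointwise from exactness and $\Ii\models\Tmc$; for $A\sqsubseteq\forall r^-.B$ one uses that every $r$-predecessor of $d'$ in $\Jj$ maps to an $r$-predecessor of $d$ in $\Ii$, which lies in $B^\Ii$, so by exactness the predecessor lies in $B^\Jj$.

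For the \emph{existential} CIs $A\sqsubseteq\exists r^-.B$ and the at-least CIs $A\sqsubseteq\qnrgeq{n}{s}{B}$ I would argue that every obligation of every element is discharged by the rule that processes it. Each $d'$ is fresh in a unique bag $v$, and the rules $(\mathbf{R}_1)$, $(\mathbf{R}_2)$, $(\mathbf{R}_3)$ are applied to $v$ (with $(\mathbf{R}_3)$ also covering the elements that enter an $r$-block at its top, namely those fresh in the parent $u$ with $\bago(u)\neq r$). By inspection these rules install, for $d'$ and every relevant restriction, exactly the witnesses guaranteed by $\Ii\models\Tmc$: for non-transitive $s$ the minimal set $W$ in $(\mathbf{R}_1)$ supplies at least $n$ distinct $s$- or $s^-$-successors in $B$, and for a transitive role name $r$ the set $W$ in $(\mathbf{R}_3)$ supplies both the at-least and the $\exists r^-$ witnesses. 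Correctness of reusing relevant successors that are already present here rests on Lemma~\ref{lem:relevant}(\ref{it:relevant1}).

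The main work, and the principal obstacle, is the \emph{at-most} CIs $A\sqsubseteq\qnrleq{n}{s}{B}$. Here I would show that $h$ is \emph{injective} on $S=\{e'\mid (d',e')\in s^\Jj,\ e'\in B^\Jj\}$ whenever $d'\in A^\Jj$; together with exactness and $\Ii\models\qnrleq{n}{s}{B}$ this yields $|S|\le n$. For a transitive role name $s=r$ the argument is as follows. Since $d'\in A^\Jj$ gives $d\in A^\Ii$ by exactness, the closure defining $\rel^\Ii_r$, instantiated at the member $d$ of $\rel^\Ii_r(d)$, forces every $r$-successor of $d$ in $B^\Ii$ into $\rel^\Ii_r(d)$; contrapositively, any $r$-successor $e$ of $d$ with $e\notin\rel^\Ii_r(d)$ satisfies $e\notin B^\Ii$. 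By the signature condition~\ref{it:canbase0} all $r$-edges leaving $d'$ occur only in bags with $\bago=r$, which form a single connected block in which, by~\ref{it:can3} (and~\ref{it:can4} for nominal anchors in $\Mm$), the copy of $\rel^\Ii_r(d)$ is shared rather than duplicated; moreover the witness-children added by $(\mathbf{R}_3)$ attach only elements $e'$ with $e\notin\rel^\Ii_r(d)$, hence with $e'\notin B^\Jj$. Consequently $S$ consists exactly of the pairwise $h$-distinct copies of $\rel^\Ii_r(d)\cap B^\Ii$, giving injectivity. For a non-transitive $s$ the edge-deletions and fresh copying in $(\mathbf{R}_1)$ make the originals of the members of $S$ pairwise distinct, again giving injectivity.

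I expect the transitive at-most case to be the delicate point, because it is exactly where canonicity is indispensable: one must rule out both that two distinct copies of the same relevant successor inflate the count and that a non-relevant existential witness accidentally lands in $B$. All three ingredients—confinement of transitive-role edges to same-$\bago$ blocks (\ref{it:canbase0}), the exact sharing of $\rel^\Ii_r$ across such a block (\ref{it:can3},\ref{it:can4}), and the closure property of $\rel^\Ii_r$ from Lemma~\ref{lem:relevant}—are used together precisely here.
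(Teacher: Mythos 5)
Your proposal is correct and follows essentially the same route as the paper's proof: satisfaction is transferred along the original-taking homomorphism using preservation of unary types (the paper's $(\dagger)$), witnesses for existential/at-least restrictions come from the rules $(\mathbf{R}_1)$--$(\mathbf{R}_3)$, and the transitive at-most case is handled by showing that the $B$-successors of $d'$ are exactly single copies of elements of $\rel^\Ii_r(d)$ — the paper packages this as a ``subclaim'' that each element of $\rel^\Ii_r(d)$ is copied exactly once among the $r$-successors of $d'$, using Lemma~\ref{lem:relevant}(\ref{it:relevant1}) and the closure property of $\rel^\Ii_r$ just as you do. Your observation that non-relevant fresh witnesses cannot land in $B^\Ii$ (by the defining closure of $\rel^\Ii_r$ instantiated at $d$) is precisely the final step of the paper's argument.
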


\begin{proof}
During the proof,  we will repeatedly use the following fact, a consequence of the  existence of a homomorphism from  $\Jmc$ to \Imc. 
\begin{itemize}
\item[($\dagger$)] The unary type of each element $d'$ in \Jmc coincides with that of its original $d$ in \Imc. 
\end{itemize}

All the assertions in $\Amc$ are satisfied in $\Jmc$ since
the root bag in the tree decomposition of  $\Jmc$ contains all the
individuals in \Amc, as well as  all the non-transitive edges
involving those elements,  and  because all edges of transitive roles
among individuals are added by $(\mathbf{R}_0)$.  

If $\Kmc$ is a \SOQ KB, the construction of $\Jmc$ ensures that each
nominal is interpreted by a singleton. Indeed, since every bag in the
tree decomposition contains the set of elements $\Delta$,  consisting
of all the nominals and their relevant $r$-successors for every
transitive role $r$, the definition of the root bag and rules
$(\mathbf{R}_0)$ -- $(\mathbf{R}_3)$  ensure that for every role name  
edges between elements in $\Imc$ and $\Delta$ are faithfully
replicated in  $\Jmc$.   
 
Next, we need to show that for every  CI $C \sqsubseteq D  \in \Tmc$
it holds that $C^\Jmc \subseteq D^\Jmc$. For the case where $D$ is of
the form $\bigsqcup_j B_j$,  this follows directly from ($\dagger$). 
For the case where $D$ is a universal, existential, at-most or
at-least restriction over a non-transitive role, the statement holds
by construction because of $(\mathbf{R}_1)$. It thus remains to
consider restrictions involving transitive roles. Assume that $r$ is a
transitive role name and let $f' \in C^\Jmc$.  

If $D =\exists r^-.B$ or $D = \qnrgeq n r B$ for some concept name
$B$, then $(\mathbf{R}_3)$ ensures that $f' \in D^\Imc$, whereas for
the case where $D = \forall r^-. B$, this follows since $B$ is a
concept name and because of ($\dagger$).  Indeed, assume $(e', f') \in
r^\Jmc$, and let $e$ be the original of $e'$.  By construction, $\Jmc$
can be mapped homomorphically to $\Imc$, and thus we have $(e,f) \in
r^\Imc$. Because the unary types of $f$ and $f'$ coincide, and $\Imc
\models \Kmc$, it holds that $e \in B^\Imc$ and therefore $e' \in
B^\Jmc$.

The case where $D = \qnrleq n r B$ is slightly more subtle.  Let $f$
be the original of $f'$ in \Imc. We shall need a subclaim that during
the construction of \Jmc exactly one copy of each element from
$\rel^\Ii_r(f)$ was introduced among $r$-successors of $f'$. 
Let $w$ be the bag closest to the root bag containing $f'$ and such
that $\bago(w)= r$. Let us see that $w$ contains 
exactly one copy of each element from $\rel^\Ii_r(f)$. Indeed, $w$
could have only been introduced by $(\mathbf{R}_0)$, 
$(\mathbf{R}_2)$ or $(\mathbf{R}_3)$. 

If $w$ was introduced by $(\mathbf{R}_2)$, using
point~\ref{it:relevant1} of Lemma~\ref{lem:relevant} we see
immediately that $\rel^\Ii_r(f) \subseteq \rel^\Ii_r(d)$, where $d$ is
like in the formulation of rule $(\mathbf{R}_2)$, and so it is clear
that $\Imf(w)$ contains exactly one copy of each element from
$\rel^\Ii_r(f)$.

For $w$ introduced by $(\mathbf{R}_0)$, the argument is analogous.

Now assume $w$ was introduced by $(\mathbf{R}_3)$. Let $v$ be the
parent of $w$ and let $d'$ and $d$ be like in the formulation of rule
$(\mathbf{R}_3)$. Then, $f'$ is a copy of an element $f$ from
$\rel^\Ii_r(e)$, where $e$ is an $r$-successor
(or $r^-$-successor) of $d$ and $e \notin \rel^\Ii_r(d)\cup\Delta$. By
point~\ref{it:relevant1} of Lemma~\ref{lem:relevant}, $\rel^\Ii_r(f)
\subseteq  \rel^\Ii_r(e)$. Hence, the rule adds exactly one copy of each
element of $\rel^\Ii_r(f)$: either a fresh copy, or the copy inherited
from $v$ if the element belongs to $\rel^\Ii_r(d) \cup \Delta$.
     
Every other possible $r$-successor of $f'$ can only be added by a
subsequent application of $(\mathbf{R}_3)$.  The definition of this
rule ensures that each fresh element added in this way is a copy of an
$r$-successor of $f$ that does not belong to $\rel^\Ii_r(f)$. This
completes the proof of the subclaim.

Now, assume that $f' \in C^\Jj$. Because $\Jj$ maps homomorphically
into $\Ii$, if $g'$ is an $r$-successor of $f'$ that belongs to
$B^\Jj$, then its original $g$ is an $r$-successor of $f$ that belongs to
$B^\Ii$. We also know that $f\in C^\Ii$ and
because $\Ii\models \Tt$ we know that there is at most $n$ such
elements $g$ and they all belong to $\rel^\Ii_r(f)$. But elements of
$\rel^\Ii_r(f)$ are copied exactly once among $r$-successors of $f'$,
so the number of possible $g'$ is also bounded by $n$, and we are
done. 
\end{proof}

\begin{claim}
  $\Jmc^* \models \Kmc$
\end{claim}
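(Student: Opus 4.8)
The plan is to verify the three requirements that make $\Jmc^*$ a model of $\Kmc$: transitivity of every transitive role, $\Jmc^*\models\Amc$, and $\Jmc^*\models\Tmc$. The first is immediate, since $r^{\Jmc^*}=(r^\Jmc)^+$ is transitive by definition for every $r\in\troles$. For the ABox I would note that $\Jmc$ and $\Jmc^*$ agree on all concept names and on all non-transitive roles, and that $r^\Jmc\subseteq r^{\Jmc^*}$ for transitive $r$; hence every assertion already satisfied in $\Jmc$ (Claim~\ref{cl:model}) remains satisfied in $\Jmc^*$. The substance is $\Jmc^*\models\Tmc$, which I would establish by a case analysis over the normal forms, using throughout the homomorphism $h$ from $\Jmc^*$ to $\Imc$ (which exists precisely because $r^\Imc$ is transitive, so that $r$-paths of $\Jmc$ collapse to single $r^\Imc$-edges) together with the fact $(\dagger)$ that each element and its original share their unary type.

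The CIs not affected by the closure are routine. Boolean CIs $\bigsqcap_i A_i\sqsubseteq\bigsqcup_j B_j$ depend only on concept extensions, which are identical in $\Jmc$ and $\Jmc^*$. Right-hand sides $\exists r^-.B$ and $\qnrgeq n s B$ are monotone in the role edges, so the (at-least, $n$-many) witnesses guaranteed by Claim~\ref{cl:model} survive the enlargement $r^\Jmc\subseteq r^{\Jmc^*}$. For $A\sqsubseteq\forall r^-.B$ with $r$ non-transitive, and for $A\sqsubseteq\qnrleq n s B$ with $s$ non-transitive, the relevant role is literally unchanged ($s^{\Jmc^*}=s^\Jmc$), so these follow from Claim~\ref{cl:model}. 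The only universal restriction affected by the closure, $A\sqsubseteq\forall r^-.B$ with $r$ transitive, I would settle through $h$: if $d\in A^{\Jmc^*}$ and $(e,d)\in r^{\Jmc^*}$, then $(h(e),h(d))\in r^\Imc$ with $h(d)\in A^\Imc$, so $h(e)\in B^\Imc$ because $\Imc$ is a genuine model with $r^\Imc$ transitive, and hence $e\in B^{\Jmc}$ by $(\dagger)$.

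The delicate case is $A\sqsubseteq\qnrleq n r B$ for a transitive role name $r$, where replacing $r^\Jmc$ by $(r^\Jmc)^+$ can only add successors. Fix $d\in A^{\Jmc^*}$ with original $\bar d=h(d)$. Applying $h$, every $r^{\Jmc^*}$-successor $e$ of $d$ with $e\in B$ maps to an $r^\Imc$-successor $h(e)\in B^\Imc$ of $\bar d$; since $\bar d\in A^\Imc$ and $\Imc\models A\sqsubseteq\qnrleq n r B$, there are at most $n$ distinct such originals, and all of them lie in $\rel^\Imc_r(\bar d)$. It therefore suffices to show that $h$ is injective on the $r^{\Jmc^*}$-successors of $d$ that lie in $B$; equivalently, that each such original is realised by a single copy reachable from $d$. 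The subclaim in the proof of Claim~\ref{cl:model} already handles the \emph{direct} $r^\Jmc$-successors (each element of $\rel^\Imc_r(\bar d)$ is copied exactly once), so the new work is to rule out that the transitive closure makes a second, deeper copy of an already-counted $B$-element reachable.

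This is the main obstacle, and it is where the normal form earns its keep. Consider an $r^{\Jmc^*}$-path $d=x_0,\dots,x_k$ with $x_k\in B$, and let $y_i=h(x_i)$, so that $(\bar d,y_i)\in r^\Imc$ and $(y_i,y_k)\in r^\Imc$ for all $i$ by transitivity of $r^\Imc$, with $y_k\in B$. By the normal form~\eqref{eq:NF} each $y_i$ lies in $A$ or in $A'$, where $A\sqsubseteq\qnrleq n r B$, $A'\sqsubseteq\qnrgeq{n+1} r B$, and $\top\sqsubseteq A\sqcup A'$. For $i<k$ the case $y_i\in A'$ is impossible: then $y_i$ would have $n+1$ distinct $B$-successors, all of them $r^\Imc$-successors of $\bar d$, contradicting $\bar d\in A$ and $\Imc\models A\sqsubseteq\qnrleq n r B$. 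Hence every intermediate $y_i$ lies in $A$, and since $y_k\in B$ with $(y_i,y_k)\in r^\Imc$ the definition of $\rel^\Imc_r$ yields $y_k\in\rel^\Imc_r(y_i)$ for all $i$. Thus $y_k$ stays inside the relevant set of every center it passes through, so by the copy-reuse in rule $(\mathbf{R}_3)$ (reflected in condition~\ref{it:can3}) its copy is never recreated but always inherited, and $x_k$ is the unique copy of $y_k$ reachable from $d$. Injectivity on $B$-successors follows, the count is bounded by $n$, and $\Jmc^*\models A\sqsubseteq\qnrleq n r B$. I expect making this cascade fully rigorous---in particular its interaction with the branching of the construction---to require the most care, as it is precisely the step that would fail for TBoxes not closed under the totality CIs of the normal form.
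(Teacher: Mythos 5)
Your proposal is correct and follows essentially the same route as the paper's proof: the easy cases are dispatched via the homomorphism from $\Jmc^*$ to $\Imc$ and $(\dagger)$, and for at-most restrictions over a transitive role you use the normal-form totality CIs \eqref{eq:NF} to force every intermediate original on an $r$-path into the concept name $A$ naming the restriction, conclude that the target $B$-element is a relevant successor of each of them, and then invoke the one-copy-per-relevant-successor subclaim and the copy-reuse in $(\mathbf{R}_3)$ to bound the count by $n$. The paper phrases this as a contradiction with freshness of the deepest copy rather than as injectivity of $h$ on reachable $B$-elements, but the argument is the same.
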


\begin{proof} 
Clearly, by definition of $\Jj^*$, every transitive role is
interpreted as a transitive relation. 

Now, we observe that the homomorphism from $\Jmc$ to $\Imc$ can be
naturally extended to  a homomorphism from $\Jmc^*$ to $\Imc$. Thus,
($\dagger$) applies also to the unary types of elements in $\Jmc^*$.
Therefore, $\Jmc \models (\Tmc, \Amc)$ implies that $\Jmc^* \models
\Amc$, and that $\Jmc^*$ satisfies every concept inclusion $C
\sqsubseteq D$, for every $D$ of the form $\bigsqcap_j B_j$, $\exists
r^-. B$ with $r$ a role name, and $\qnrgeq n s B$
with $s$ a non-transitive role or a transitive role name. It remains
to deal with universal and at-most restrictions. For
$D=\forall r^-. B$, the argument is the same as for $\Jj$.

To show that $\Jmc^*$ satisfies CIs of the form $C \sqsubseteq \qnrleq
n r B$, with $r$ a transitive role name, we use the subclaim used in the proof for  $\Jmc$. 
%
%
Let $f' \in C^{\Jmc^*}$ and $w$ be the bag closest to the root such
that $f'\in F(w)$ and $\bago(w)= r$.  We will show that no fresh
$r$-successors of $f'$ (in $\Jj^*$) violating the at-most restriction
are added as a result of multiple applications of rule
$(\mathbf{R}_3)$. More precisely, we show that for every $g' \in
B^{\Jmc^*}$, if $(f',g') \in r^{\Jmc^*}$, then $g' \in \Delta_w$.

Towards a contradiction, suppose that at some point among those
successors one fresh element $g'$ is added, such that $g' \in
B^\Jmc$. Then there exists a path of bags $w= w_0, w_1, w_2 \dots,
w_k$ with $k>0$,
corresponding to $r$-successors $f'= e'_0, e'_1,e'_2, \dots, e'_k=g'$
and their originals $f= e_0, e_1, e_2, \dots, e_k= g$, such that $e'_k
\in F(w_k)$.
Since $\Kmc$ is normalized, \eqref{eq:NF} ensures that there is a
concept name $A$ that is equivalent to $\qnrleq n r B $. Because
$\Imc \models \Kmc$, we have $f \in A^\Imc$; it then follows by
$(\dagger)$ that $f' \in A^\Jmc$. Furthermore, $(f, e_{k-1}) \in
r^\Imc$, implies $e_{k-1} \in {\qnrleq n r B}^\Imc$ and therefore
$e_{k-1} \in A^\Imc$. Since $e_k \in B^\Imc$, it then follows that
$e_k \in \rel^\Ii_r(e_{k-1})$. However, this contradicts the
assumption that $e'_k \in F(w_k)$, because by the subclaim, there is
exactly one copy of every relevant successor of $e_{k-1}$ in
$\Delta_{k-1}$, and the definition of $(\mathbf{R}_3)$ prevents the
introduction of fresh copies of these elements.
%
%
%
%
\end{proof}


\section{Missing proofs for Section~\ref{sec:omqa}}

\subsection{Encoding Tree Decompositions}

We provide missing details on the encoding.  Let $(T,\tau)$ be a
$\Omega$-labeled tree with $\Omega$ defined as in the main part. For
convenience, we use $\Imc_w$ and $r_w$ to refer to the single
components of $\tau$ in a node $w$, that is, $\tau(w)=(r_w,\Imc_w)$.
Given an element $d\in \Delta$, we say that $v,w\in T$ are
\emph{$d$-connected} iff $d\in\Delta^{\Imc_u}$ for all $u$ on the
unique shortest path from $v$ to~$w$. In case $d\in \Delta^{\Imc_w}$,
we use $[w]_d$ to denote the set of all $v$ which are $d$-connected
to~$w$.  We call $(T,\tau)$ \emph{\Amc-consistent} if
$\mn{ind}(\Amc)\subseteq\Delta^{\Imc_\varepsilon}$.  An
\Amc-consistent $\Omega$-labeled tree $(T,\tau)$ \emph{represents} a
pair $(T,\bagz)$ where the interpretations $\bagz(w)$ are defined by
taking, for all $w\in T$: 
\begin{align*}
  \Delta_w & = \{[w]_d\mid d\in \Delta^{\Imc_w}\},\\
  A^{\Imf(w)} & = \{[w]_d\mid d\in A^{\Imc_w}\}, \\
  r^{\Imf(w)} & = \{ ([w]_{d},[w]_{e})\mid (d,e)\in r^{\Imc_w}\},
\end{align*} 
for all concept names $A$ and role names $r$ occurring in \Kmc. We
further associate an interpretation $\Imc_{(T,\tau)}$ to every
consistent $\Omega$-labeled tree by taking
$\Imc_{(T,\tau)}=\bigcup_{w\in T}\Imf(w)$ and interpret individual
names $a\in \mn{ind}(\Amc)$ by taking $a^{\Imc_{(T,\tau)}} =
[\varepsilon]_a$. Note that this is well-defined due to
\Amc-consistency. It can be easily verified that $(T,\bagz)$ is a tree
decomposition of $\Imc_{(T,\tau)}$. Conversely, given some
interpretation \Imc and a tree decomposition $(T,\Imf)$ of \Imc of
width $K$, one can construct a \Amc-consistent $(T,\tau)$ such that
$\Imc_{(T,\tau)}$ is isomorphic to \Imc, based on the size $2K$ of
$\Delta$~\cite{GradelW99}. Note that in both cases the outdegree is
preserved so that it suffices to consider $N$-ary trees throughout.

\lemautomaton*

We provide the missing transitions for the initial state $q_0$ on
input symbol $(x,\Imc)$. The automaton allows transitions only in case
$x=\bot$, $\mn{ind}(\Amc)\subseteq \Delta^\Imc$, and
$\Imc\models\Amc^{nt}$ where $\Amc^{nt}$ is obtained from \Amc by dropping all
assertions $r(a,b)$ with $r$ a transitive role.
In such a case, $\Lambda$ contains all transitions
$(q_0,(x,\Imc),q_1\cdots q_m)$, $m\leq N$ where 
$q_i$ is of shape $\langle (x,\Imc), F', \Mmc, \Bmc_i,\Cmc_i,
e_i,r_i,f_i\rangle$ with $F'=\Delta^{\Imc}$ and $\Mmc$ some
$\Sigma_{\mn{C}}\cup\Sigma_{\mn{R}}^{nt}$-interpretation with
$\Delta^{\Mmc}\subseteq \Delta^\Imc$ such that the five
conditions from the main part are satisfied and additionally
$\Amc=\Amc^{nt}\cup\bigcup_{i}\Bmc_i$.



\section{Missing proofs for Section~\ref{sec:fomqa-soq}}

\lemmasoqsafe*

\begin{proof}
Let $w$ be a child of a node $v$ with $\bago(w)=\bago(v) = r$ for some
$r\in\troles$. Then, $w$ was added to the tree decomposition by
applying the rule $(\mathbf{R}_3)$. Let $e$ and $d$ be as in
$(\mathbf{R}_3)$. Because we are working with a $\SOQ$ 
KB,  $e$ is an $r$-successor of $d$. We can also conclude
that there is no $r$-path from $e$ to $d$, because otherwise we would
have $e\in Q_r^\Ii(d) \subseteq \rel^\Ii_r(d)$, which is explictly excluded in
$(\mathbf{R}_3)$. It follows that the length of any path of nodes $u$ with
$\bago(u)=r$ is bounded by the number of $r$-clusters in $\Ii$, which is
finite.
\end{proof}

\lemmasoqwidth*

\begin{proof}
$\Tmf$ is regular, so it has at most $p$
non-isomorphic subtrees for some $p\in\mathbb{N}$. 

Consider a downward path of nodes in $\Tmf$ associated with the same
transitive role $r$. If this path is longer then $p$, the subtrees
rooted at some two nodes on this path are isomorphic. Because one of
them is a proper subtree of the other, it follows immediately that
$\Tmf$ contains an infinite downward path of nodes associated with
$r$, which contradicts safety. Consequently, such a path has length at
most $p$.

Consider a connected subset of nodes of $\Tmf$ associated
with a transitive role name $r$. It is necessarily a subtree of
$\Tmf$. By the argument above, the height of this tree is at most
$p$. Because $\Tmf$ has bounded outdegree, the size of the subtree 
is bounded. 

It follows that each interpretation assigned to a node of $\Smf$ is
the union of a bounded number of interpretations of bounded
size. Hence, $\Smf$ has bounded width.

Similarly, the outdegree of any node of $\Smf$ is the sum
of the outdegrees of a bounded number of nodes of $\Tmf$. Hence,
$\Smf$ has bounded outdegree.
\end{proof}






\subsection{Coloured Blocking Principle}

We first recall (and adapt) key definitions and technical results
underlying colored blocking \citeauthor{GogaczIM18}. The difference
wrt. to the original is that the set of elements that need to be
excluded from the interpretation to make it well-behaved is now
$\Delta^\Mm$, not just $\Nomi(\Kk)$. This does not affect the cited
results.  The whole development is entirely independent of the
knowledge base, and works for any finite set of excluded elements. The
only property of nominals that is ever used is that they are preserved
by homomorphisms. Thus, in what follows all homomorphisms (and
consequently all isomorphisms) are assumed to be identity over
$\Delta^\Mm$. We shall write $\Ii \setminus \Delta^\Mm$ for the
interpretation $\Ii$ restricted to the domain $\Delta^\Ii \setminus
\Delta^\Mm$.

\begin{definition} \label{def:neighbourhood}
For $d \in\Delta^\Ii \setminus \Delta^\Mm$, the
\emph{$n$-neighbourhood} $N_n^{\Ii}(d)$ is the subinterpretation of
$\Ii$ induced by $\Delta^\Mm$ and all elements $e \in \Delta^\Ii
\setminus \Delta^\Mm$ within distance $n$ from $d$ in $\Ii \setminus
\Delta^\Mm$, enriched with a fresh concept interpreted as $\{d\}$. For
$a \in \Delta^\Mm$, $N_n^{\Ii}(a)$ is the subinterpretation induced by
$\Delta^\Mm$, enriched similarly.
\end{definition}

\begin{definition} \label{def:coloring}
A \emph{coloring with $k$ colors} of an interpretation $\Ii$ is any
interpretation $\Ii'$ that enriches $\Ii$ with $k$ fresh concept names
$B_1, \dots, B_k$, provided that $B_1^{\Ii'}, \dots,
B_k^{\Ii'}$ is a partition of $\Delta^{\Ii'}$. We say that $d \in B_i^{\Ii'}$
has color $B_i$.  A coloring $\Ii'$ is \emph{$n$-proper} if
for each $d \in \Delta^{\Ii'}$ all elements of $N_n^{\Ii'}(d)$ have
different colors.
\end{definition}


\begin{lemma} \label{lem:coloring}
If $\Ii\setminus \Delta^\Mm$ has bounded degree, then for all $n\geq
0$ there exists an $n$-proper coloring of $\Ii$ with finitely many
colors.
\end{lemma}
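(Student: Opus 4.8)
The plan is to recast the statement as an ordinary vertex-coloring problem. First I would observe that a coloring $\Ii'$ of $\Ii$ is $n$-proper exactly when it is a proper vertex coloring of the \emph{conflict graph} $G$ on vertex set $\Delta^\Ii$ in which $d$ and $e$ are adjacent iff $d,e \in N_n^\Ii(f)$ for some $f \in \Delta^\Ii$: the requirement that all elements of $N_n^{\Ii'}(d)$ have different colors, imposed for every $d$, is literally the requirement that the colors form a proper coloring of $G$. Thus it suffices to show that $G$ has finite chromatic number.

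Next I would describe $G$ explicitly. Since $\Delta^\Mm \subseteq N_n^\Ii(f)$ for every $f$, all of $\Delta^\Mm$ forms a clique in $G$, and taking $f = e$ shows that every $e \in \Delta^\Ii \setminus \Delta^\Mm$ is adjacent to every element of $\Delta^\Mm$; hence each vertex in the finite set $\Delta^\Mm$ is adjacent to all others. For two vertices $d,e \in \Delta^\Ii \setminus \Delta^\Mm$, they lie in a common $N_n^\Ii(f)$ iff some $f$ is within distance $n$ of both, which by the triangle inequality for the graph distance in $\Ii \setminus \Delta^\Mm$ is equivalent to $d$ and $e$ being at distance at most $2n$ in $\Ii \setminus \Delta^\Mm$ (the nontrivial direction takes the midpoint of a shortest $d$--$e$ path as the witness $f$). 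Hence, restricted to $\Delta^\Ii \setminus \Delta^\Mm$, the graph $G$ is precisely the graph in which two vertices are joined iff their distance in $\Ii \setminus \Delta^\Mm$ is at most $2n$. Because $\Ii \setminus \Delta^\Mm$ has degree bounded by some $D$, the number of vertices within distance $2n$ of any fixed vertex is at most $M := D + D^2 + \dots + D^{2n}$, so $G$ has bounded degree $M$ off $\Delta^\Mm$.

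Finally I would color as follows: assign to each of the finitely many elements of $\Delta^\Mm$ its own private color, and color $G$ restricted to $\Delta^\Ii \setminus \Delta^\Mm$ with $M+1$ further colors disjoint from the first set. Since this subgraph has maximum degree $M$, every \emph{finite} subgraph of it is $(M+1)$-colorable by the greedy algorithm, and by the de Bruijn--Erd\H{o}s compactness theorem the whole (possibly infinite) subgraph is then $(M+1)$-colorable. Combining the two color sets gives a proper coloring of all of $G$ with $|\Delta^\Mm| + M + 1$ colors, which is finite; unfolding the definition of $G$, this is the desired $n$-proper coloring.

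The hard part will be the passage to the infinite domain: greedy coloring cannot be applied directly since $\Delta^\Ii$ may be infinite, so the crucial step is invoking de Bruijn--Erd\H{o}s (equivalently, propositional compactness) to lift finite $(M+1)$-colorability to the entire conflict graph. The accompanying routine work is to pin down the adjacency relation of $G$ precisely and to derive the degree bound $M$ from the bounded-degree hypothesis on $\Ii \setminus \Delta^\Mm$.
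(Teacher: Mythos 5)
Your proof is correct. The paper itself does not prove Lemma~\ref{lem:coloring}: it imports it (with the cosmetic change of excluding $\Delta^\Mm$ rather than just $\Nomi(\Kk)$) from the cited prior work of Gogacz et al., so there is no in-paper argument to compare against; your reduction to properly coloring the ``distance at most $2n$'' conflict graph, the degree bound $M = D + D^2 + \cdots + D^{2n}$ off the finite clique $\Delta^\Mm$, and the passage from finite to infinite via de~Bruijn--Erd\H{o}s (or, equivalently here, transfinite greedy coloring, which suffices for bounded-degree graphs) is exactly the standard argument one would expect behind the citation. You also correctly handle the two small subtleties: that $N_n^\Ii(f)$ for $f\in\Delta^\Mm$ contains only $\Delta^\Mm$, so witnesses for adjacency of elements outside $\Delta^\Mm$ must themselves lie outside $\Delta^\Mm$, and that every element of $\Delta^\Mm$ must receive a color private from everything else since $\Delta^\Mm$ is contained in every neighborhood.
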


\begin{definition}\label{def:bounded}
An interpretation $\Ii$ is \emph{$\ell$-bounded} if for
each $r\in\troles$, each simple $r$-path has length at most $\ell$.
\end{definition}

The following statement combines several steps established by
\citeauthor{GogaczIM18}~[\citeyear{GogaczIM18}].

\begin{theorem} \label{thm:blocking}
Let $\psi$ be a UCQ. Let $\ell, n \in \mathbb{N}$ and let 
$\Ii$, $\Ii'$, and $\Jj$ be interpretations such that 
\begin{enumerate}
  \item $\Ii\setminus \Delta^\Mm$ is $\ell$-bounded and has bounded degree;
\item $\Ii'$ is an $n$-proper coloring of $\Ii$ with finitely many colors;
\item $\Delta^\Jj \subseteq \Delta^\Ii$, $A^\Jj = A^\Ii\cap \Delta^\Jj$ for 
all $A\in\mn{N_C}$, and for all $(d,e)\in r^\Jj \setminus r^\Ii$ with
$r\in\mn{N_R}$, there exists $e'$ such that $(d,e')\in r^{\Ii'}$ and
$N^{\Ii'}_n(e)\simeq N^{\Ii'}_n(e')$;
\item $\Jj\setminus \Delta^\Mm$ is $\ell$-bounded.
\end{enumerate}
If $n$ is large enough with respect to $\ell$, $|\psi|$, and
$|\Delta^\Mm|$, then $\Ii^* \notmodels \varphi$ implies $\Jj^*
\notmodels \varphi$.
\end{theorem}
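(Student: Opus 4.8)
The plan is to prove the contrapositive: assuming $\Jj^* \models \varphi$, I will exhibit a match of $\varphi$ in $\Ii^*$. Since $\varphi$ is a UCQ, fix one of its disjuncts $q$ together with a match $\pi$ of $q$ in $\Jj^*$. As $\Jj^*$ is the transitive closure of $\Jj$, each transitive-role atom $r(t,t')$ of $q$ is witnessed by a simple $r$-path in $\Jj$ from $\pi(t)$ to $\pi(t')$, and each non-transitive atom by a single $\Jj$-edge. First I would bound the length of every such path: each maximal subpath avoiding $\Delta^\Mm$ is a simple $r$-path in $\Jj\setminus\Delta^\Mm$ and hence has length at most $\ell$ by Condition~4, while a simple path meets the finite set $\Delta^\Mm$ at most $|\Delta^\Mm|$ times; thus its length is bounded by a function of $\ell$ and $|\Delta^\Mm|$. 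Collecting the images $\pi(t)$ together with the edges and simple paths realizing the atoms of $q$ gives a finite subinterpretation $W\subseteq\Jj$ of size bounded by some $B=B(\ell,|\varphi|,|\Delta^\Mm|)$, each of whose edges is either an edge of $\Ii$ or a redirected edge from $r^\Jj\setminus r^\Ii$.

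Next I would map $W$ homomorphically into $\Ii^*$ by \emph{unfolding} its redirected edges. For each redirected edge $(d,e)$ of $W$, Condition~3 supplies a witness $e'$ with $(d,e')\in r^{\Ii'}\subseteq r^\Ii$ and a colour-preserving isomorphism $\phi_{(d,e)}\colon N_n^{\Ii'}(e)\to N_n^{\Ii'}(e')$ that fixes $\Delta^\Mm$; bounded degree of $\Ii\setminus\Delta^\Mm$ (Condition~1) makes all these neighbourhoods finite. The map $h\colon \Delta^W\to\Delta^\Ii$ is built from the isomorphisms $\phi_{(d,e)}$ and the identity on the part of $W$ lying in $\Ii$: traversing $W$ from a fixed base element, $h$ follows genuine $\Ii$-edges directly and, upon crossing a redirected edge $(d,e)$, continues inside the honest neighbourhood $N_n^{\Ii'}(e')$ through $\phi_{(d,e)}$, so that the redirected edge is replaced by the real $\Ii$-edge out of $h(d)$. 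Crucially, $\Ii$-edges are preserved by every $\phi_{(d,e)}$, so they survive all such switches, and every redirected edge is translated into a genuine $r^\Ii$-edge. With at most $B$ redirected edges, each switch consuming only boundedly much radius, choosing $n$ sufficiently larger than $B$ guarantees that each switch takes place strictly inside the region where the relevant isomorphism is defined. Once $h$ is shown to be well defined, it preserves unary predicates (by $A^\Jj=A^\Ii\cap\Delta^\Jj$ from Condition~3 together with the fact that the $\phi_{(d,e)}$ preserve concepts) and sends every edge of $W$ to an $r^\Ii$-edge; by transitivity of $\Ii^*$ the image of each witnessing path collapses to a single $r^{\Ii^*}$-edge, so $h\circ\pi$ is a match of $q$, hence of $\varphi$, in $\Ii^*$.

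The main obstacle --- and the sole reason the colouring enters --- is well-definedness of $h$: when $W$ is not tree-shaped, or when the radius-$n$ neighbourhoods of two redirected targets overlap, an element of $W$ may be reached along several routes and must receive the same value under each. Here $n$-properness (Condition~2) is decisive. Inside any $n$-neighbourhood all elements carry pairwise distinct colours, and each $\phi_{(d,e)}$ preserves colours and fixes $\Delta^\Mm$; hence on any overlap the competing local isomorphisms are \emph{forced} to agree element-by-element, and the landmark set $\Delta^\Mm$ is pinned down exactly. Taking $n$ large enough that all $B$ nested unfoldings stay within a single radius where colours remain unique makes the composition coherent --- this is precisely the dependence of $n$ on $\ell$, $|\varphi|$, and $|\Delta^\Mm|$ in the statement.

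The whole argument is an adaptation of the colored blocking principle of \citeauthor{GogaczIM18}~[\citeyear{GogaczIM18}], with the finite landmark set $\Delta^\Mm$ playing the role formerly played by the nominals $\Nomi(\Kk)$; as observed there, no property of these landmarks is ever used beyond their being fixed by all homomorphisms, so the cited steps transfer verbatim. The remaining verifications --- that the bound $B$ is indeed a function of $\ell$, $|\varphi|$, and $|\Delta^\Mm|$, and that concept and non-transitive atoms are handled by the base case where the relevant part of $W$ already lies in $\Ii$ --- are routine.
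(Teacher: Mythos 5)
Your reconstruction is correct in its essentials, but note that the paper does not actually prove Theorem~\ref{thm:blocking}: the statement is imported from \citeauthor{GogaczIM18}~[\citeyear{GogaczIM18}], and the only argument the paper offers is the preceding remark that replacing the excluded set $\Nomi(\Kk)$ by the finite set $\Delta^\Mm$ does not affect the cited development, since the only property of nominals ever used is that they are fixed by all homomorphisms. Your sketch therefore supplies more than the paper does, and it matches the mechanism of the cited colored blocking principle faithfully: bounding each witnessing simple $r$-path by roughly $\ell\cdot(|\Delta^\Mm|+1)+|\Delta^\Mm|$ via Condition~4, assembling a substructure $W$ of size bounded in terms of $\ell$, $|\psi|$, $|\Delta^\Mm|$, unfolding redirected edges through the neighbourhood isomorphisms of Condition~3, and invoking $n$-properness to force agreement on overlaps. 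Two points are glossed over and would need a sentence each in a full write-up. First, Condition~3 attaches the witness $e'$ to the \emph{original} source $d$; once $d$ has itself been displaced by an earlier unfolding, the honest edge must be transported through that earlier isomorphism, which requires the standard shrinking-radius bookkeeping (an isomorphism of $n$-neighbourhoods restricts to isomorphisms of $(n-k)$-neighbourhoods of interior points at distance $k$), not merely ``$n$ sufficiently larger than $B$''. Second, the coherence argument needs the two competing images of an element to lie in a \emph{common} neighbourhood before distinctness of colours can identify them, which is again arranged by the radius budget. Both are handled in the cited work, so your closing observation that the steps transfer verbatim once $\Delta^\Mm$ replaces the nominal set is precisely the paper's own position.
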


Let us now apply this to our counter-witness $\Ii$ with a strongly
canonical tree decomposition $\Smf$ of bounded width and outdegree.

The first condition in Theorem~\ref{thm:blocking} is ensured by 
the properties of $\Smf$. Indeed, for each element $d$ outside of
$\Delta^\Mm$, the degree is bounded by the sum of the sizes of the
bags containing $d$. But by strong canonicity each such $d$ occurs
only in the bag where it is fresh, and a subset of its
children. Because $\Smf$ has bounded degree and width, it follows that
the degree of each such $d$ is also bounded. For $\ell$-boundedness it
suffices to notice that after removing $\Delta^\Mm$, all bags
corresponding to a transitive role name $r$ are
disjoint. Consequently, for $\ell$ we can take any number bounding the
size of bags in $\Smf$.

By Lemma~\ref{lem:coloring}, for any $n$ there exists an $n$-proper
coloring $\Ii'$ of $\Ii$ with finitely many colors. 
We construct $\Jj$ from $\Ii$ like before, but for $v$ and $u$ we
additionally require that the isomorphism $h$ witnessing $\Imf(v)
\simeq \Imf(u)$ preserves $n$ neighborhoods in $\Ii'$: for each $d\in
\Delta_v$, $N_n^{\Ii'}(d) \simeq N_n^{\Ii'}(h(d))$. One can find such
$v$ and $u$ on each path because $\Smf$ has bounded width and the size
of neighborhoods of radius $n$ is bounded as well; the latter holds
because $\Ii\setminus \Delta^\Mm$ has bounded degree.  This additional
requirement ensures the third condition in Theorem~\ref{thm:blocking}.

We claim that the length of the longest simple $r$-path avoiding
$\Delta^\Mm$ for any transitive role name $r$ can only increase by
one.  Indeed, let us examine what happens when we redirect $r$-edges
from $d_v$ with $\bago(v)=r$. As $d_v\notin \Delta^\Mm$, by the strong
canonicity of $\Smf$ we know that $v$ is the only bag storing
$r$-edges that contains $d_v$. Consequently, all $r$-edges that enter
$d_v$ in $\Ii$ originate in $\Delta_v$. Previous steps of the
procedure might have redirected some $r$-edges to $d_v$, but all these
edges originate in elements from the subtree of $\Smf$ rooted at $v$,
and these elements do not belong to $\Delta^\Mm$. Hence, when $v$ is
processed, all these edges disappear, because their origins are
removed. Because we only redirect $r$-edges to bags that have not been
replaced by a single element before, no $r$-edge will ever be
redirected to $d_v$. Thus, we have a global property that each
$r$-edge in $\Jj$ that is a result of a redirection originates in an
element that has no incoming $r$-edges. This completes the proof of
the claim. Let us take for $\ell$ in Theorem~\ref{thm:blocking} the
maximal size of a bag in $\Smf$ plus 1. 

Finally, rewriting our PEQ $\varphi$ as a 
UCQ $\psi$, we can conclude that if $n$ is sufficiently large,
$\Jj^*\notmodels \varphi$.


\section{Missing proofs for Section~\ref{sec:finsat-siq}}

We formulate our result slightly stronger in terms of concept
satisfiability, which is the problem of deciding, given $C,\Tmc$,
whether there is a model \Imc of $\Tmc$ with $C^\Imc\neq\emptyset$.

\begin{restatable}{lemma}{lemsiqhard}\label{lem:siqhard}
  Both finite and unrestricted concept satisfiability relative to
  \SIQfwd TBoxes over one transitive role are \twoexp-hard.
\end{restatable}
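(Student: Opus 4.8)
The plan is to reduce from the word problem for $2^n$-space bounded alternating Turing machines (ATMs), which is \twoexp-hard~\cite{chandraAlternation1981}, reusing the construction sketched in the proof of Theorem~\ref{thm:IQSIQ}. Given an ATM $M$ and an input $w$, I would build in polynomial time a \SIQfwd TBox \Tmc over a single transitive role $r$ (and no non-transitive roles) together with a concept name $I$ such that $I$ is satisfiable relative to \Tmc iff $M$ accepts $w$. Each configuration of width $2^n$ is stored in the leaves of a full binary tree of depth $n$: concept names $L_0,\dots,L_n$ mark the levels while $X_0,\dots,X_{n-1}$ realise an $n$-bit address counter, so that the $L_n$-elements correspond bijectively to cell addresses $0,\dots,2^n-1$. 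Since the CIs that grow these trees and propagate the counter bits only use the forward direction of $r$, one transitive role suffices for the skeleton of each configuration tree.

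First I would chain the configuration trees into an alternation skeleton. Starting from $I$, CIs using the \emph{inverse} role $r^-$ link successive configurations so that universal and existential configurations alternate, with each universal configuration branching to its two successors. This is exactly where \SIQfwd's controlled inverses enter: $r^-$ appears only inside existential restrictions $\exists r^-.(\cdots)$, never inside a number restriction, as the logic demands. I would then add CIs enforcing (i) that each tree's leaf labelling is a syntactically well-formed configuration (one symbol per cell, with consistent left/head/right triples), (ii) that neighbouring trees describe successor configurations, and (iii) that the first tree carries the initial configuration on $w$. Conditions (i) and (iii) are routine local constraints, so the effort concentrates on (ii).

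The core of the reduction---and the step I expect to be the main obstacle---is enforcing the successor relation between neighbouring configuration trees with only \emph{one} transitive role, as we cannot afford a second role to reach across to the next configuration. The idea is to have each leaf of the outgoing tree $T_1$ spawn an $r$-successor in a fresh concept $S$ carrying the label that its cell should take in the next configuration (computable locally from the current $C^x_\sigma$ labels), and each leaf of the incoming tree $T_2$ spawn an $S$-successor carrying its current label, both inheriting the counter value of their leaf. By transitivity, all $2\cdot 2^n$ of these $S$-elements become $r$-successors of the single anchor $d$ (satisfying $B_\exists$) sitting between the two trees, while, crucially, the forward $r$-cone of $d$ reaches nothing beyond $T_1$, $T_2$ and their $S$-elements, since configuration roots do not forward-chain into other configurations. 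Adding $B_\exists\sqsubseteq\qnrleq{2^n}{r}{S}$ then forces the $S$-elements of equal counter value from $T_1$ and $T_2$ to be identified, synchronising the label predicted by $T_1$ with the one actually present in $T_2$, cell by cell. The delicate point is to argue that this counting is tight: the $2^n$ distinct counter values pinned down by the $X_i$ bits guarantee that exactly $2^n$ merges are forced and admissible, so no spurious identification can corrupt the address counter or the encoded symbols.

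Finally, to get hardness uniformly in both regimes, I would check that for the constructed \Tmc the concept $I$ is satisfiable iff it is finitely satisfiable. The converse being trivial, the point is to produce a finite model from an accepting computation: since $M$ is space bounded there are only finitely many configuration types, so an accepting (infinite) run collapses to a finite lasso-shaped model by identifying equal configurations and closing the resulting cycles through the $r^-$-skeleton. The locality observation above ensures all constraints survive this folding---each anchor $d$ still sees exactly the $2^n$ synchronised $S$-elements of its own adjacent pair, so the at-most restriction is unaffected by the newly created cycles. Hence the reduction yields \twoexp-hardness for finite concept satisfiability as well, which together with the trivial inclusion completes the proof.
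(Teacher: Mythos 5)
Your proposal follows essentially the same route as the paper: the same reduction from $2^n$-space bounded ATMs, the same binary address trees with $L_i$/$X_i$, the same $r^-$-linked alternation skeleton with inverses confined to existential restrictions, the same $S$-element synchronisation via $B_\exists\sqsubseteq\qnrleq{2^n}{r}{S}$, and the same space-boundedness argument for collapsing an accepting run into a finite model. The one refinement the paper adds to make your ``delicate point'' about tight counting go through is to split $S$ into distinct concepts $S_\exists, S_\forall^1, S_\forall^2$ for the different synchronisation steps, since the forward $r$-cone of an anchor also contains the $S$-elements that $T_2$'s leaves spawn for the \emph{next} configuration pair.
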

\begin{proof} 
We reduce the word problem for exponentially space bounded alternating
Turing machines (ATMs).  We actually use a slightly unusual ATM model
which is easily seen to be equivalent to the standard model.  

An \emph{alternating Turing machine (ATM)} is a tuple
$M=(Q,\Theta,\Gamma,q_0,\Delta)$ where $Q=Q_{\exists}\uplus
Q_{\forall}$ is the set of states that consists of \emph{existential
states} in~$Q_{\exists}$ and \emph{universal states} in~$Q_{\forall}$.
Further, $\Theta$ is the input alphabet and $\Gamma$ is the tape
alphabet that contains a \emph{blank symbol} $\Box \notin \Theta$,
$q_0\in Q_{\exists}$ is the \emph{starting state}, and the
\emph{transition relation} $\Delta$ is of the form
$\Delta\subseteq Q\times \Gamma\times Q\times \Gamma \times \{L,R\}.$
The set $\Delta(q,\sigma):=\{(q',\sigma',M)\mid
(q,\sigma,q',\sigma',M)\in\Delta\}$ must contain exactly two elements
for every $q \in Q$ and $\sigma \in \Gamma$.  Moreover, the state $q'$
must be from $Q_\forall$ if $q \in Q_\exists$ and from $Q_\exists$
otherwise, that is, existential and universal states alternate.  Thus,
for every configuration, there are precisely two successor
configurations; we refer to them using the \emph{first} and
\emph{second} successor configurations by fixing an (arbitrary) order
on $\Delta(q,\sigma)$.  Note that there is no accepting state. The ATM
accepts if it runs forever and rejects otherwise. Starting from the
standard ATM model, this can be achieved by assuming that
exponentially space bounded ATMs terminate on any input and then
modifying them to enter an infinite loop from the accepting state.

A \emph{configuration} of an ATM is a word $wqw'$ with
\mbox{$w,w'\in\Gamma^*$} and $q\in Q$.  We say that $wqw'$ is
\emph{existential} if~$q$ is, and likewise for \emph{universal}.
\emph{Successor configurations} are defined in the usual way.  Note
that every configuration has exactly two successor configurations. We
call these the \emph{left} and \emph{right} successor configurations.

A \emph{computation tree} of an ATM $M$ on input $w$ is an infinite
tree without leafs whose nodes are labeled with configurations of $M$
such that
\begin{itemize}

  \item the root is labeled with the initial configuration
    $q_0w$;

  \item if an inner node is labeled with an existential configuration 
    $wqw'$, then it has a single successor and this successor is labeled 
    with a successor configuration of $wqw'$;

  \item if an inner node is labeled with a universal configuration
    $wqw'$, then it has two successors and these successors are
    labeled with the two successor configurations of~$wqw'$.

\end{itemize}
An ATM $M$ \emph{accepts} an input $w$ if there is a computation
tree of $M$ on $w$. 

There is a fixed $2^n$ space bounded ATM $M$ whose word problem is
\twoexp-hard~\cite{chandraAlternation1981}. We assume that $M$ has been
modified so that it never attempts to move left on the left-most tape
cell and so that we are only interested in non-empty inputs $w$.

Given $M$ and $w$, we construct a TBox $\Tmc_{M,w}$ using a concept
name $I$ and a single transitive role $r$ such that $M$ accepts $w$
iff $I$ is satisfiable relative to $\Tmc_{M,w}$. 

We refrain from repeating the concept inclusions given in the main
part, and rather address Points~\textit{(i)} and~\textit{(iii)}, and
provide the missing details for Point~\textit{(ii)}. 

\smallskip For Point~\textit{(i)}, we first ensure that there is
exactly one leaf labeled with a symbol of shape $(q,a)$, by using a
fresh concept name $\mn{Head}$ and including
the following CIs: 
\begin{align*}
  L_0 & \sqsubseteq \qnrleq{(2^n - 2)}{r}{(\bigsqcup_{i=1}^nL_i)} \\
  L_0 & \sqsubseteq \qnreq{1}{r}{(L_n\sqcap \mn{Head})} \\
  %
  L_n \sqcap \mn{Head} & \sqsubseteq \bigsqcup_{(q,a)\in \Sigma}
  C^h_{(q,a)} \\
  L_n \sqcap \neg\mn{Head} & \sqsubseteq \bigsqcap_{(q,a)\in
  \Sigma}\neg C^h_{(q,a)} 
\end{align*}
Intuitively, the first CI restricts the number of nodes in the tree,
so that the enforced tree is \emph{exactly} the expected full binary
tree. The second CI states that there is \emph{exactly} one leaf
satisfying $\mn{Head}$. The remaining CIs enforce that the leaf
satisfying \mn{Head} indeed satisfies some $C_{(q,a)}^h$, and that the others
do not satisfy such a concept. 
%

It remains to synchronize neighboring leaves according their labeling
with symbols $C^x_\sigma$. For doing so, we need to
make precise how numbers are associated to leaves. A leaf node $d$ has
value $i\in\{0,\ldots,2^{n}-1\}$ precisely if, for all $j$ with $1\leq
j\leq n$, the $j$-th bit in the binary encoding of $i$ is $1$ iff $d$
satisfies $X_i$. As a convention, we assume that $X_0$ is responsible
for the least significant bit. It will be convenient to use the
abbreviations $X_i^*$ and $X_i^+$, $0\leq i\leq n-1$, for 
\begin{align*} 
  \neg X_i \sqcap \bigsqcap_{0\leq k\leq i-1}X_k \quad\text{and}\quad
  X_i \sqcap \bigsqcap_{0\leq k\leq i-1} \neg X_k,
\end{align*}
respectively.  To synchronize leaves with consecutive numbers, we
enforce additional $r$-successors as follows. We introduce another set
of concept names: $H$ and $D_{\sigma},D_{\sigma}'$, for every
$\sigma\in \Sigma$. The idea is to introduce for a leaf with value $i$
and labeling $C_{\sigma_l}^l,C_{\sigma_h}^h,C_{\sigma_r}^r$ two
$r$-successors: one with value $i+1$ satisfying $D_{\sigma_h}$ and
$D'_{\sigma_r}$, and one with value $i$ satisfying $D_{\sigma_l}$ and
$D'_{\sigma_h}$. We additionally require that all these successors
satisfy $H$.  This is realized using the following CIs, for every
$\sigma,\sigma'\in \Sigma$, every $i$ with $0\leq i<n$, and every
$j>i$:
\begin{align*}
  L_n \sqcap X_i^* \sqcap C^h_\sigma\sqcap C_{\sigma'}^r & \sqsubseteq
  \exists r.(H \sqcap X_i^+\sqcap D_{\sigma}\sqcap D'_{\sigma'}) \\
  L_n \sqcap X_i^* \sqcap C^l_\sigma\sqcap C_{\sigma'}^h & \sqsubseteq
  \exists r.(H\sqcap X_i^*\sqcap D_{\sigma}\sqcap D'_{\sigma'}) \\
  L_n\sqcap X_i^* \sqcap X_j & \sqsubseteq \forall r.(H\rightarrow
  X_j)\\
  L_n\sqcap X_i^* \sqcap \neg X_j & \sqsubseteq \forall
  r.(H\rightarrow \neg X_j) \\
  D_\sigma & \sqsubseteq \neg D_{\sigma'} \quad \text{if
  $\sigma\neq\sigma'$} \\
  D'_\sigma & \sqsubseteq \neg D'_{\sigma'} \quad \text{if
  $\sigma\neq\sigma'$}
\end{align*}
Now, to enforce the synchronization, we add the concept inclusion
$$L_0  \sqsubseteq \qnrleq{(2^n-2)}{r}{H},$$
forcing some of the newly created successors to identify, which is
only possible if they have the same number and the same labeling with
the $D_\sigma,D_{\sigma}'$.  Using the provided intuitions it is not
difficult to verify that: 

\smallskip\noindent\textit{Claim~1.} If $\Imc$ is a model of the TBox
constructed so far, $d\in L_0^\Imc$, and $d_0,\ldots,d_{2^n-1}$ are
leaf elements reachable from $d$, then $d_0,\ldots,d_{2^n-1}$
represent a valid configuration of $M$.

\medskip For Point~\textit{(ii)}, that is, synchronization of
successor configurations, we create at the leaves of every enforced
tree successors which contain the current configuration and the
successor configuration(s) as described in the body of the paper.
Using at-most restrictions enforced in elements satisfying
$B_\exists,B_{\forall,1},B_{\forall,2}$, we force corresponding
successors to join and thus to synchronize. We use additional
concept names $E,S_\exists,S_\forall^1,S_\forall^2$.
Notice that we need different versions of the concept $S$ mentioned in
the body of the paper, to synchronize between different parts of the
computation, and moreover that $E$ is used to make a non-deterministic
choice for existential configurations. We include the following
concept inclusions, for all $i\in\{1,2\}$, and all
$\sigma_l,\sigma_h,\sigma_r\in \Sigma$ which can occur in neighboring
cells in a valid configuration of $M$: 
\begin{align*}
  %
  %
  L_0 & \sqsubseteq\forall r.E \sqcup \forall r.\neg E \\
  L_n \sqcap \exists r^-.A_{\exists}^i\sqcap \langle
  \sigma_l,\sigma_h,\sigma_r\rangle \sqcap \neg E & \sqsubseteq
  \exists r.(S_\exists\sqcap
  D_{\langle\sigma_l,\sigma_h,\sigma_r\rangle}^1) \\
  L_n \sqcap \exists r^-.A_{\exists}^i\sqcap \langle
  \sigma_l,\sigma_h,\sigma_r\rangle \sqcap E & \sqsubseteq \exists
  r.(S_\exists\sqcap D_{\langle\sigma_l,\sigma_h,\sigma_r\rangle}^2)
  \\
  L_n \sqcap \exists r^-.A_\forall\sqcap \langle
  \sigma_l,\sigma_h,\sigma_r\rangle& \sqsubseteq \exists
  r.(S_\forall^1\sqcap
  D^1_{\langle\sigma_l,\sigma_h,\sigma_r\rangle}) \\
  L_n \sqcap \exists r^-.A_\forall\sqcap  \langle
  \sigma_l,\sigma_h,\sigma_r\rangle & \sqsubseteq  \exists
  r.(S_\forall^2\sqcap
  D^2_{\langle\sigma_l,\sigma_h,\sigma_r\rangle}) \\
  L_n \sqcap \exists r^-.A^i_{\exists}\sqcap C^h_{\sigma_h} &
  \sqsubseteq \exists r.(S_\forall^i\sqcap D_{\sigma_h}) \\ 
  L_n \sqcap \exists r^-.A_{\forall}\sqcap C^h_{\sigma_h} &
  \sqsubseteq \exists r.(S_\exists\sqcap D_{\sigma_h}) \\
  S_\forall^i & \sqsubseteq \neg S_\forall^{3-i} \\
  S_\forall^i & \sqsubseteq \neg S_\exists
\end{align*}
where $\langle \sigma_l,\sigma_h,\sigma_r\rangle$ is an abbreviation
for $C^l_{\sigma_l}\sqcap C^h_{\sigma_h}\sqcap C^r_{\sigma_r}$, and
$D_{\langle \sigma_l,\sigma_h,\sigma_r\rangle}^j$, $j\in \{1,2\}$ is
$D_\sigma$ if, when in a configuration with neighboring cells
$\sigma_l,\sigma_h,\sigma_r$, the next label of the current cell in
the $j$-th successor configuration is $\sigma$ (recall that every
configuration has exactly two successor configurations).  

Similar to the what was done before, we propagate the values
associated to the leaves to the newly created successors. This is done
using the following CIs, for every $i$ with $0\leq i\leq n-1$ and $S\in
\{S_\exists, S_\forall^1, S_{\forall}^2\}$: 
\begin{align*}
  L_n \sqcap X_j & \sqsubseteq \forall r.(S \rightarrow X_j ) \\
  L_n \sqcap \neg X_j & \sqsubseteq \forall r.(S \rightarrow \neg X_j )
\end{align*}

It remains to give the promised at-most restrictions, for all $i\in
\{1,2\}$:
\begin{align*}
  B_\exists & \sqsubseteq \qnrleq{2^n}{r}{S_\exists}, &&
  B_{\forall}^i \sqsubseteq \qnrleq{2^n}{r}{S_{\forall}^i}.
\end{align*}

Again, based on the provided intuitions it is not difficult to verify
that neighboring configurations are indeed successor configurations
according to $M$'s transition relation. More precisely, we have:

\smallskip\noindent\textit{Claim~2.} If \Imc is a model of the
TBox constructed so far, then for all $d,e,e'\in \Delta^\Imc$ such
that $(d,e),(d,e')\in r^\Imc$ we have:
\begin{enumerate}

  \item if $d\in B_\exists^\Imc$, $e\in (A_\exists^i)^\Imc$ for some
    $i\in \{1,2\}$, and $e'\in A_{\forall}^\Imc$, then the
    configuration below $e'$ is the first successor
    configuration of the configuration below $e$ if $e'\in (\forall
    r.E)^\Imc$; and the second successor
    configuration of the configuration below $e$ if $e'\in (\forall
    r.\neg E)^\Imc$;

  \item if, for some $i\in \{1,2\}$, $d\in (B_\forall^i)^\Imc$, $e\in A_\forall^\Imc$, and
    $e'\in (A_{\exists}^i)^\Imc$, then the configuration below $e'$ is
    the $i$-th successor configuration of the configuration below $e$.

\end{enumerate}

\medskip For Point~\textit{(iii)}, that is, the enforcement of the
initial configuration, let $w=a_0\cdots a_{n-1}$ be the input word.
It is routine to give (polynomially sized) concepts
$(X=k)$, for $0\leq k\leq n-1$, describing all leaves with value $k$,
and $(X\geq n)$ describing all nodes with value $\geq n$. We include
the following concept inclusions, for all $i$ with $0<i<n$:
\begin{align*}
  L_n \sqcap \exists r^-.I\sqcap (X=0) & \sqsubseteq C_{(q_0,a_0)}^h \\
  L_n \sqcap \exists r^-.I\sqcap (X=i) & \sqsubseteq C_{a_i}^h \\
  L_n \sqcap \exists r^-.I\sqcap (X\geq n) & \sqsubseteq C_{\Box}^h.
\end{align*}
Recall that $\Box$ is the blank symbol in the tape alphabet. It is not
difficult to see that the configuration encoded in the tree starting
from $I$ is the initial configuration.

\smallskip
This finishes the construction of the TBox. Correctness of the
reduction is established in the following. 

\smallskip\noindent\textit{Claim~3.} $M$ accepts $w$ iff $I$ is
satisfiable relative to $\Tmc_{M,w}$.

\smallskip\noindent\textit{Proof of Claim~3.} $(\Rightarrow)$ If $M$ accepts $w$, there is an infinite
alternating computation of $M$ on input $w$. We inductively convert
the computation into an interpretation $\Imc$ in the expected way:
\begin{itemize}

  \item Start with a tree whose leaves are labeled with the
    initial configuration, and whose root is labeled with
    $I,A_\exists^1,L_0$. 

  \item Choose some node $d$ in the interpretation constructed so far
    satisfying either $A_\forall$ or $A_{\exists}^i$ and let
    $\alpha$ be the configuration the leaves of the tree below $d$.

    \begin{itemize}

      \item if $d$ satisfies $A_\forall$, we inductively know that
	$\alpha$ is a universal configuration and has two successor
	configurations $\alpha_1,\alpha_2$ in the accepting
	computation. We add new elements $e_1,e_2,f_1,f_2$ to \Imc
	such that, for $i\in\{1,2\}$, $e_i\in (B_{\forall}^i)^\Imc$,
	$f_i\in (A_{\exists}^i\sqcap L_0)^\Imc$, and $(e_i,d)\in
	r^\Imc$, and $(e_i,f_i)\in r^\Imc$. Moreover, add below
	$e_i$ a tree with configuration $\alpha_i$ in the leaves. 

      \item if $d$ satisfies $A_\exists^i$, we inductively know that
	$\alpha$ is an existential configuration and has one successor
	configurations $\alpha'$ in the accepting computation. If
	$\alpha'$ is the second successor of $\alpha$ (according to
	$\Delta$), we add all elements in the tree below $d$ to
	$E^\Imc$.  Additionally, add new elements $e,f$ to \Imc
	such that $e\in B_{\exists}^\Imc$,
	$f\in (A_{\forall}\sqcap L_0)^\Imc$, and $(e,d)\in
	r^\Imc$, and $(e,f)\in r^\Imc$. Moreover, add below
	$e$ a tree with configuration $\alpha'$ in the leaves. 

    \end{itemize}

\end{itemize}
It is routine to verify that the interpretation $\Imc$ obtained in the
limit is a model of $I$ and $\Tmc_{M,w}$.

\smallskip $(\Leftarrow)$ Let $\Imc$ be a model of $\Tmc_{M,w}$ and
$d_0\in I^\Imc$. By construction of $\Tmc_{M,w}$ (in particular, the
concept inclusions starting with $I$ from the body of the paper),
there is an infinite tree $(T,\tau)$ labeled with elements from $\Delta^\Imc$
and having the following properties: 
\begin{itemize}

  \item the root node is labeled with $d_0$, that is, $\tau(\varepsilon)=d$;

  \item $T$ has outdegree one in odd levels (assuming $d$ is in the
    first level) and outdegree two in even levels;

  \item nodes $n$ in odd levels of $T$ satisfy $\tau(n)\in
    (A_\exists^i\sqcap L_0)^\Imc$, for some $i\in\{1,2\}$, and have
    a single successor $n'$ such that there is some $e\in \Delta^\Imc$
    with $e\in B_\exists^\Imc$ and $(e,\tau(n)),(\tau(n'),e)\in
    r^\Imc$;
    
  \item nodes $n$ in even levels of $T$ satisfy $\tau(n)\in
    (A_\forall \sqcap L_0)^\Imc$ and have
    two successors $n_1,n_2$ such that there are $e_1,e_2\in \Delta^\Imc$
    with $e_i\in (B_\forall^i)^\Imc$, $(e_i,\tau(n)),(\tau(n'),e_i)\in
    r^\Imc$, and $\tau(n_i)\in (A_\exists^i)^\Imc$;

\end{itemize}
Note that $T$ has the structure of an infinite alternating computation
of $M$. It remains to associate a configuration to every node in the
tree. By construction of $T$, we have $\tau(n)\in L_0^\Imc$, for all
nodes $n\in T$. By Claim~1, there is a tree labeled with a
configuration below each $\tau(n)$.

By Point~\textit{(iii)}, we know that the configuration in the below
$d_0$ is the initial configuration $\alpha_0$ of $M$ on input $w$. We
can now use Claim~2 and the construction of $T$ to inductively
construct an infinite, alternating computation of $M$ on input
$w$. Thus $M$ accepts $w$.

\medskip This finishes the proof of the claim, and in fact of
\twoexp-hardness in the unrestricted case. For finite
reasoning, we have to show that $I$ is satisfiable relative to
$\Tmc_{M,w}$ iff it is finitely satisfiable relative to
$\Tmc_{M,w}$. Since the ``if''-direction is trivial, we focus on the
``only if''-direction.

It suffices to show that there is a finite model \Imc of $I$ and
$\Tmc_{M,w}$ in case $M$ accepts $w$. We can vary the construction
in the $(\Rightarrow)$-direction of the proof of Claim~3 as follows.
Instead of adding new elements in every step of the inductive
construction, we can reuse old elements in case the new element is
associated to the same configuration as the old one. Since $M$ is
space bounded, this will happen on every possible path, and hence we
end up with a finite model. 
\end{proof}

%
%
%

\end{document}